\newcommand{\mdp}{M}
\newcommand{\mdptext}{\textnormal{\textsc{Mdp}}}
\newcommand{\lptext}{\textnormal{\textsc{Lp}}}
\newcommand{\mdpstext}{\textnormal{\textsc{Mdp}}s}
\newcommand{\ltltext}{\textnormal{\textsc{Ltl}}}
\newcommand{\mctext}{\textnormal{\textsc{Mc}}}
\newcommand{\prism}{\textnormal{\textsc{Prism}}}
\newcommand{\pctlstarttext}{\mbox{\textnormal{\textsc{Pctl}${}^{*}$}}}
\newcommand{\states}{\mathsf{S}}
\newcommand{\trans}{\mathsf{T}}
\newcommand{\state}{s}
\newcommand{\statep}{t}
\newcommand{\pstate}{s'}
\newcommand{\tran}{\mu}
\DeclareMathOperator{\enabledWord}{en}
\newcommand{\ist}{i}
\DeclareMathOperator{\lastWord}{last}
\newcommand{\last}[1]{\lastWord({#1})}
\newcommand{\pat}{\rho}
\DeclareMathOperator{\pathsWord}{Paths}
\newcommand{\sched}{\eta}
\newcommand{\plen}{n}
\newcommand{\enabledt}[1]{\enabledWord({#1})}
\newcommand{\pathsched}[2]{\pathsWord({#2},{#1})}
\newcommand{\prr}[2]{\pr^{#1}({#2})}
\newcommand{\prrmi}[4]{\pr^{{#4},{#1}}_{#3}({#2})}
\newcommand{\prrhmi}[4]{\pr^{{#4},{#1}}_{#3}(\reach{#2})}
\newcommand{\prrhminos}[3]{\pr^{{#3}}_{#2}(\reach{#1})}
\DeclareMathOperator{\distWord}{Dist}
\newcommand{\ddist}[1]{\distWord({#1})}
\newcommand{\extension}[1]{{#1}^{\uparrow}}
\DeclareMathOperator{\pr}{Pr}
\newcommand{\prismtext}{\textsc{Prism}}
\newcommand{\goesto}[3]{{#1} \xrightarrow{#2} {#3}}
\DeclareMathOperator{\lenWord}{len}
\newcommand{\pathlen}[1]{\lenWord({#1})}
\theoremstyle{definition}
\newtheorem{definition}{Definition}
\theoremstyle{plain}
\newtheorem{lemma}{Lemma}
\newtheorem{theorem}{Theorem}
\DeclareMathOperator{\reachWord}{reach}
\newcommand{\reach}[1]{\reachWord({#1})}
\newcommand{\tsts}{U}
\newcommand{\alltran}{m}
\newcommand{\allstates}{n}
\newcommand{\bi}{k}
\newcommand{\basis}{B}
\newcommand{\nbasis}{N}
\newcommand{\cannotreach}{\states^{\sup 0}}
\newcommand{\canavoidreach}{\states^{\inf 0}}
\newcommand{\maybe}{\states^{?}}
\newcommand{\card}[1]{|{#1}|}
\newcommand{\srestrict}[2]{{#1}\downarrow{#2}}
\newcommand{\apttext}{apt}
\newcommand{\lpproblem}{\mathcal{L}}
\DeclareMathOperator{\stateWord}{state}
\newcommand{\tstate}[1]{\stateWord({#1})}
\newcommand{\bsched}[1]{\basis_{#1}}
\theoremstyle{plain}
\newtheorem{notation}{Notation}
\newtheorem{assumption}{Assumption}
\newcommand{\pmatrixsched}[1]{(A\!\downarrow\!{#1})}
\newcommand{\matrixsched}[1]{C^{#1}}
\newcommand{\abs}[1]{|{#1}|}
\newcommand{\glpk}{\texttt{glpk}}
\begin{document}

\newcommand{\titlerunning}{Efficient computation of exact solutions for quantitative model checking}
\newcommand{\authorrunning}{Sergio Giro}

\title{Efficient computation of exact solutions for quantitative model checking}

\author{Sergio Giro\institute{Department of Computer Science, University of Oxford, Oxford, OX1
3QD, UK\thanks{This work was supported by DARPA and the Air Force Research
Laboratory under contract FA8650-10-C-7077 (PRISMATIC). Any opinions, findings,
conclusions, or recommendations expressed in this material are those
of the authors and do not necessarily reflect the views of AFRL or
the U.S. Government.}}}

\maketitle

\renewcommand{\sup}{\max}
\renewcommand{\inf}{\min}

\begin{abstract}
Quantitative model checkers for Markov Decision Processes typically use finite-precision arithmetic.
If all the coefficients in the process are rational numbers, then the model checking results are rational,
and so they can be computed exactly. However, exact techniques are generally too expensive or limited in scalability.
In this paper we propose a method for obtaining exact results starting
from an approximated solution in finite-precision arithmetic. The input of the
method is a description of a scheduler, which can be obtained by a model checker
using finite precision. Given a scheduler, we show how to obtain a corresponding
basis in a linear-programming problem, in such a way that the basis is optimal whenever the scheduler attains the worst-case probability. This correspondence is
already
known for discounted MDPs, we show how to apply it in the undiscounted case provided
that some preprocessing is done. Using the correspondence,
the linear-programming problem can be solved in exact arithmetic starting
from the basis obtained. As a consequence, the method finds the worst-case 
probability even if the scheduler provided by the model checker was not optimal.
In our experiments, the calculation of exact solutions from a
candidate scheduler is significantly faster than the calculation using
the simplex method under exact arithmetic starting from a default basis.
\end{abstract}

\section{Introduction}

Model checking of Markov Decision Processes (\mdpstext)
has been proven to be a useful tool to verify and evaluate systems with
both probabilistic and non-deterministic choices. Given a model of the system
under consideration and a qualitative property concerning probabilities, such as
``the system fails to deliver a message with probability at most $0.05$'', a
model checker deduces whether the property holds or not for the model. As different resolutions
of the non-deterministic choices lead to different probability values, verification
techniques for \mdpstext\ rely on the concept of \emph{schedulers} (also called policies,
or adversaries), which are defined as functions choosing an option for each of the paths
of an \mdptext. Model-checking algorithms for \mdpstext\ proceed by
reducing the model-checking problem to that of finding the
maximum (or minimum) probability to reach a set of states
under all schedulers~\cite{DBLP:conf/fsttcs/BiancoA95}.

Different techniques for calculating these extremal probabilities
exist: for an up-to-date tutorial, see~\cite{DBLP:conf/sfm/ForejtKNP11}. Some
of them (for instance, value iteration)
are approximate in nature.
If all the coefficients in the process are rational numbers, then the model checking results are rational,
and so they can be computed exactly. However, exact techniques are generally too expensive or limited in scalability.
Linear programming (\lptext)
can be used to obtain exact solutions, but in order to achieve reasonable efficiency
it is often carried out using finite-precision, and so the results are always
approximations. (We performed some experiments showing how
costly it is to compute exact probabilities using \lptext\ without our method.)
In addition, the native operators in programming languages like Java
have finite precision: the extension to exact arithmetic involves significant
reworking of the existing code.

We propose a method for computing exact solutions. Given any approximative
algorithm being able to provide a description of a scheduler, our method
shows how to extend the algorithm in order to get exact solutions. The method
exploits the well-known correspondence between model-checking problems
and linear programming problems~\cite{DBLP:conf/fsttcs/BiancoA95}, which allows
to compute worst-case probabilities by computing optimal solutions for \lptext\ problems.
We do not know of any similar approach to get exact values. This might be due to
the fact that, from a purely theoretical point of view, the problem is not very
interesting, as exact methods exists and they are theoretically efficient: the problem
is that, in practice, exact arithmetic introduces significant overhead.

The simplex algorithm~\cite{book:applied-mathematical-programming}
for linear programming works by iterating over different \emph{bases}, which are
submatrices of the matrix associated to the \lptext\ problem. Each basis defines
a \emph{solution}, that is, a valuation on the variables of the problem.
The simplex method stops when the basis yields a solution with certain properties,
more precisely, a so-called \emph{feasible} and \emph{dual feasible} solution. By algebraic
properties, such a solution is guaranteed to be optimal.

The core of our method is the interpretation of the scheduler as a basis for the
linear programming problem. Given a scheduler complying with certain natural
conditions, a basis corresponding to the scheduler can be used as a starting point for
the simplex algorithm. We show that, if the scheduler is optimal, then the solution
associated to the corresponding basis is feasible and dual feasible, and so a
simplex solver provided with this basis needs only to check dual feasibility and
compute the solution corresponding to the basis. As our experiments show, these
computations can be done in
exact arithmetic without a huge impact in the overall model-checking time. In fact,
using the dual variant of the simplex method,
the time to obtain the exact solution is less than the time spent
by value iteration.
 If the scheduler is not
optimal, the solver starts the iterations from the basis. This is useful
for two reasons: we can let the simplex solver finish in order to get the exact 
solution; or, once we know that we are not getting the optimal solution, we can
perform some tuning in the model checker as, for instance, reduce the convergence
threshold (we also show a case in which the optimal scheduler cannot be found with thresholds within the 64-bit IEEE~754 floating point precision).

The correspondence between schedulers and bases is already known for discounted \mdpstext\ (see, for instance~\cite{depen-1963:probaprodu:journal:98}). We show the correspondence
for the undiscounted case in case some states of the system are eliminated in preprocessing
steps. The preprocessing steps we consider are usual in model
checking~\cite{DBLP:conf/sfm/ForejtKNP11}: given a set of target states, one of the
preprocessing algorithms removes the states that cannot reach the target, while the other
one removes the states that can avoid reaching the target. These are qualitative algorithms
based on graphs that do not perform any arithmetical operations.

The next section introduces the preliminary concepts we need along the paper.
Section~\ref{sec:method} presents our method and the proof of correctness.
The experiments are shown in Section~\ref{sec:exp-results}. The last section
discusses related results concerning complexity and policy iteration.

\section{Preliminaries}

We introduce the definitions and known-results used throughout the paper, concerning both
Markov decision process and linear programming.

\subsection{Markov decision processes}

\begin{definition}
\label{def:mdp}
Let $\ddist{A}$ denote the set of discrete probability distributions
over the set $A$. A Markov Decision Process (\mdptext) $\mdp$ is a pair
$(\states, \trans)$ where $\states$ is a finite
set of \emph{states} and $\trans \subseteq \states \times \ddist{\states}$ is
a set of \emph{transitions}~\footnote{Defining transitions as pairs helps to deal
with the case in which the same distribution is enabled
in several states}. Given $\tran = (\state, d) \in \trans$, the value $d(\statep)$
is the probability of making a transition to $\statep$ from $\state$ using $\tran$.
We write 
$\tran(\statep)$ instead of $d(\statep)$, and write $\tstate{\tran}$ for $\state$.
We define the set $\enabledt{\state}$ as the set of
all transitions $\tran$ with $\tstate{\tran} = \state$. For simplicity, we make the
usual assumption that every state has at least one enabled transition:
$\enabledt{\state} \not= \emptyset$ for all $\state \in \states$.
\end{definition}

We write $\goesto{\state}{\tran}{\statep}$ to denote
$\tran \in \enabledt{\state} \land \tran(\statep) > 0$. 
A path in an \mdptext\ is a (possibly infinite) sequence
$\pat = \state^{0}.\tran^{1}.\state^{1}.\cdots.\allowbreak
							.\tran^{\plen}.\state^{\plen}$,
where
$\tran^{\ist} \in \enabledt{\state^{\ist-1}}$ and
$\tran^{\ist}(\state^{\ist}) > 0$ for all
$\ist$.
If $\pat$ is finite, the last state of $\pat$ is denoted by
$\last{\pat}$, and the length is denoted by $\pathlen{\pat}$
(a path having a single state has length $0$).
Given a set of states $\tsts$, we define $\reachWord(\tsts)$ to be
the set of all infinite paths
$\pat = \state^{0}.\tran^{1}.\state^{1}.\cdots$ such
that $s^{i} \in \tsts$ for some $i$.


The semantics of \mdpstext\ is given by schedulers. A scheduler $\sched$ for
an \mdptext\ $\mdp$ is a function
$\sched\colon \states \to \trans$ such that
$\sched(\state) \in \enabledt{\state}$ for all $\state$. In words, the scheduler
chooses an enabled transition based on the current
state.
For all schedulers $\sched$, $\statep \in \states$,
the set $\pathsched{\sched}{\statep}$ contains all the paths
$\state^{0}.\tran^{1}.\state^{1}.\cdots\allowbreak
							.\tran^{\plen}.\state^{\plen}$ such that
$\state^{0} = \statep$, $\tran^{\ist} = \sched(\state^{\ist-1})$
and $\goesto{\state^{\ist-1}}{\tran^{\ist}}{\state^{\ist}}$
for all $\ist$.
The reader familiar with \mdpstext\
might note that we are restricting to Markovian non-randomized schedulers
(that is, they map states to transitions, instead of the more general schedulers
mapping paths to distributions on transitions).
As explained later on, these schedulers suffice for our purposes.

The probability
$\prrmi{\sched}{\pat}{\mdp}{\statep}$ of the path $\pat$ under
$\sched$ starting from $\statep$ is
$\prod_{\ist=1}^{\pathlen{\pat}} \tran^{\ist}(\state^{\ist})$
if $\pat \in \pathsched{\sched}{\statep}$. If $\pat \not\in \pathsched{\sched}{\statep}$,
then the probability is $0$. We often omit the subindices $\mdp$ and/or $\statep$
if they are clear from the context.

We are interested on the probability of (sets of) infinite paths.
Given a finite path $\pat$, the probability of the set $\extension{\pat}$
comprising all the infinite paths that have $\pat$ as a prefix is defined by
$\prr{\sched}{\extension{\pat}} = \prr{\sched}{\pat}$. In the
usual way (that is, by resorting to the Carath\'eodory extension theorem)
it can be shown that the definition on the sets of the form $\extension{\pat}$
can be extended to $\sigma$-algebra generated by the sets $\extension{\pat}$.


The verification of \pctlstarttext~\cite{DBLP:conf/fsttcs/BiancoA95} and
$\omega$-regular formulae~\cite{DBLP:conf/icalp/CourcoubetisY90}
(for example \ltltext) can be reduced
to the problem of calculating $\sup_{\state,\sched} \prrhmi{\sched}{\tsts}{\mdp'}{\state}$
(or $\inf_{\state,\sched} \prrhmi{\sched}{\tsts}{\mdp'}{\state}$) for \mdpstext\
$\mdp'$, states $\state$ and sets $\tsts$ obtained from the formula.

In consequence, in the rest of the paper we concentrate on the following
problems.

\begin{definition}
Given an \mdptext\ $\mdp$, an initial state $\state$ and set of \emph{target}
states $U$, a \emph{reachability problem} consists of computing
$\sup_{\state,\sched} \prrhmi{\sched}{\tsts}{\mdp}{\state}$ (or
$\inf_{\sched} \prrhmi{\sched}{\tsts}{\mdp}{\state}$).
\end{definition}

From classic results in \mdptext\ theory (for these results applied to
model checking see, for instance, \cite[Chapter 3]{STAN//CS-TR-98-1601})
there exists a scheduler $\sched^{*}$ such that
\begin{equation}
\label{eq:optimal-sched-sup}
 \sched^{*} = \arg \sup_{\sched} \prrhmi{\sched}{\tsts}{\mdp}{\state}
\end{equation}
for all $\state \in \states$. That is, $\sched^{*}$ attains the maximum probability for
all states.

An analogous result holds for the the case of minimum probabilities.
There exists $\sched^{*}$ such that
\begin{equation}
\label{eq:optimal-sched-inf}
 \sched^{*} = \arg \inf_{\sched} \prrhmi{\sched}{\tsts}{\mdp}{\state}
\end{equation}
for all $\state \in \states$.

Even in a more general setting allowing for non-Markovian and randomized
schedulers, it can be proven that we can assume $\sched^{*}$ to be Markovian and
non-randomized. The existence of $\sched^{*}$ justifies our restriction
to Markovian and non-randomized schedulers.

\paragraph{Markov chains}
A Markov chain (\mctext) is an \mdptext\ such that $\card{\enabledt{\state}} = 1$ for all
$\state \in \states$. 
Note that a Markov chain has exactly one scheduler,
namely the one that chooses the only transition enabled in each
state.
Hence, for Markov chains, we often disregard the scheduler and denote the
probability of reaching $\tsts$ as $\prrhminos{\tsts}{\mdp}{\state}$.

\begin{definition}
\label{def:restrict-mc}
Given an \mdptext\ $\mdp  = (\states,\trans)$ and a scheduler
$\sched$, we define the Markov chain
$\srestrict{\mdp}{\sched} = (\states, \trans')$ where
$\tran \in \trans'$ iff $\sched(\tstate{\tran}) = \tran$.
\end{definition}

A simple application of the definitions yields
\begin{equation}
\label{eq:rest-mc-eq-probs}
 \prrhmi{\sched}{\tsts}{\mdp}{\state}
          = \prrhminos{\tsts}{\srestrict{\mdp}{\sched}}{\state} \; .
\end{equation}

\subsection{Linear programming}
\label{subsec:lin-prog}
\renewcommand{\vec}[1]{\mathbf{#1}}

We use a particular canonical form of linear programs suitable for our needs.
It is based on \cite[Appendix B]{book:applied-mathematical-programming}, which
is also a good reference for all the concepts and results given in this subsection.

A linear programming problem consists in computing
\begin{equation}
\label{eq:lp-matrix-form}
\min_{\vec{x}} \{ \vec{c}\vec{x} \mid A\vec{x} = \vec{b} \land \vec{x} \geq \vec{0} \} \; ,
\end{equation}
given a \emph{constraint matrix} $A$, a \emph{constraint vector} $\vec{b}$
and a \emph{cost vector} $\vec{c}$. In the following, we assume that $A$ has $\alltran$ rows
and $\alltran + \allstates$ columns, for some $\alltran > 0$ and $\allstates \geq 0$. Hence,
$\vec{c}$ is a row vector with $\alltran + \allstates$ components, and $\vec{b}$ is a column
vector with $\alltran$ components.

A \emph{solution} is any vector $\vec{x}$ of size $\alltran+\allstates$. The
$i$-th component of $\vec{x}$ is denoted by $x_{i}$. We say that
that a solution is \emph{feasible} if $A\vec{x} = \vec{b}$ and
$\vec{x} \geq \vec{0}$;
it is optimal if is feasible and $\vec{c}\vec{x}$ is minimum over all feasible
$\vec{x}$. A problem is \emph{feasible} if it has a feasible solution, and
\emph{bounded} if it has an optimal solution.
A non-singular $\alltran \times \alltran$ submatrix of $A$ is called a \emph{basis}.
We overload the letter $\basis$ to denote both the basis and the set of indices of the
corresponding columns in $A$. A variable $x_{\bi}$ is basic if $\bi \in \basis$.
Note that, given our assumptions on the dimension
of the constraint matrix, for all bases there are $\alltran$ basic variables
and $\allstates$ non-basic variables.
Given a basis $\basis$, and any vector $\vec{t}$, let $\vec{t}^{\basis}$
be the subvector of $\vec{t}$ having only the components in $\basis$. 
When $\basis$ is clear from the context, we use $\nbasis$
to denote the set of columns \emph{not in} $\basis$, and
use $\vec{t}^{\nbasis}$ accordingly. For a matrix
$A$, let $A^{\nbasis}$ be submatrix of $A$ having only the columns that are not in $\basis$.
The solution $\vec{x}$ \emph{induced} by the basis $\basis$ is defined as
$x_{\bi} = 0$ for all $\bi \not\in \basis$, while the values
for $\bi \in \basis$ are given by the vectorial equation
$x^{\basis} = \basis^{-1} \vec{b}$.
A solution $\vec{x}$ is \emph{basic} if there is a basis that induces $\vec{x}$.
Given $\basis$ and $\bi \in \nbasis$, the \emph{reduced cost} $\overline{c_{\bi}}$
of a variable $x_{\bi}$ is defined as
$c_{\bi} - \vec{c}^{\basis}\basis^{-1}A_{\bi}$,
where $A_{\bi}$ is the $\bi$-th column of $A$. A solution is \emph{dual feasible} if
it correspond to a basis such that $\overline{c_{\bi}} \geq 0$ for all
$\bi \in \nbasis$.

In our proofs we make use of the following lemma, which is particular
to our canonical form.
\begin{lemma}
\label{lemma:basic-satisfies-eq}
\[ A\vec{x} = \vec{b} \qquad \mbox{if $\vec{x}$ is basic} \;. \]
\end{lemma}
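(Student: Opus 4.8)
The plan is to unfold the definition of a basic solution and verify the equation by a direct, column-wise computation. Suppose $\vec{x}$ is basic; by definition there exists a basis $\basis$ that induces $\vec{x}$, which means that $x_{\bi} = 0$ for every $\bi \in \nbasis$ and that the basic components satisfy $\vec{x}^{\basis} = \basis^{-1}\vec{b}$. Here I rely on the overloading of $\basis$ as both an index set and the corresponding non-singular $\alltran \times \alltran$ submatrix of $A$; since $\basis$ is non-singular by definition, $\basis^{-1}$ exists and the induced solution is well defined.

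First I would expand the matrix-vector product $A\vec{x}$ as a sum over the columns of $A$ and split the index range according to whether a column is basic or non-basic:
\[
A\vec{x} = \sum_{\bi} x_{\bi} A_{\bi} = \sum_{\bi \in \basis} x_{\bi} A_{\bi} + \sum_{\bi \in \nbasis} x_{\bi} A_{\bi} \; .
\]
The second summand vanishes, because $x_{\bi} = 0$ for all $\bi \in \nbasis$. The first summand is precisely the product of the submatrix $\basis$ with the subvector $\vec{x}^{\basis}$, so that $A\vec{x} = \basis \, \vec{x}^{\basis}$. It then remains only to substitute the defining equation of the induced solution, giving $A\vec{x} = \basis \, \vec{x}^{\basis} = \basis \, (\basis^{-1}\vec{b}) = \vec{b}$, which is the claim.

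I do not anticipate a genuine obstacle here: the statement is a routine consequence of the definition of a basic solution. The only point that requires a little care is keeping the two roles of the symbol $\basis$ (index set versus submatrix) straight, so that the identity $\sum_{\bi \in \basis} x_{\bi} A_{\bi} = \basis \, \vec{x}^{\basis}$ is read correctly, with the understanding that the columns of the submatrix $\basis$ are exactly the $A_{\bi}$ for $\bi \in \basis$ and are aligned with the components of $\vec{x}^{\basis}$.
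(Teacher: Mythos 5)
Your proof is correct and follows essentially the same route as the paper: both split $A\vec{x}$ into its basic and non-basic column contributions, observe that the non-basic part vanishes since $x_{\bi}=0$ for $\bi\in\nbasis$, and conclude via $\basis\,\vec{x}^{\basis}=\basis\basis^{-1}\vec{b}=\vec{b}$. The only difference is cosmetic (a column-wise sum versus the paper's block matrix notation).
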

\begin{proof}
By splitting $A$ into basic and non-basic columns we get
$
 A \vec{x} = \basis \vec{x}^{\basis} + A^{\nbasis} \vec{x}^{\nbasis}   
             = \basis \basis^{-1} \vec{b} + A^{\nbasis} \vec{0}
             = I \vec{b} = \vec{b} \; . $
(Note that $\vec{x}$ might not be feasible as it could be
$\vec{x} \not\geq \vec{0}$.)
\end{proof}

Correctness of the simplex method relies on the following well-known facts about \lptext\ problems:
\begin{compactitem}
\item{Every solution that is both feasible and dual feasible is optimal}
\item{If there exists an optimal solution, then there exists a \emph{basic} solution
that is feasible and dual feasible (and hence optimal)}
\end{compactitem}

As the problems we deal 
with are ensured to be bounded and feasible, we assume that there exists an optimal solution.
In this context, the simplex algorithm explores different bases until it finds a basis whose
corresponding solution is feasible and dual feasible.

In several implementations of the algorithm
the starting basis can be specified (when it is not, a default one is used). The initial basis
does not need to be feasible nor dual feasible. In case the starting basis complies with both
feasibilities, the simplex algorithm finishes after checking that these feasibilities are met,
without any further exploration. In Subsection~\ref{subsec:subsec-lp-mdp}, we show how
reachability problems correspond to \lptext\ problems.
In Section~\ref{sec:method} we show that,
under a certain assumption on the model
checker (Assumption~\ref{ass:apt}), a basis can be obtained
from the scheduler provided by the model checker. In particular, optimal schedulers 
yield feasible bases (Theorem~\ref{thm:optimal-is-feasible}). Under
our assumption, all the bases obtained from schedulers are dual feasible
(Theorem~\ref{thm:apt-is-dual-feasible}).

Among the different variants of the simplex method, in our experiments
(Section~\ref{sec:exp-results}) we use the \emph{dual} simplex, which first looks for a
dual-feasible basis (in the so-called \emph{first phase}) and next tries to find a feasible
one while keeping dual feasibility (in the \emph{second phase}). This is appropriate in our
case since, under our assumptions, the first phase is not needed (as
formalized in Theorem~\ref{thm:apt-is-dual-feasible}). In contrast to the dual simplex,
the \emph{primal} simplex (or, simply, simplex) looks for
a feasible basis in the first phase. As a consequence, if iterations are required (according
to our results in Section~\ref{sec:method}, this is case in which the model checker fails to
provide the optimal scheduler), then the primal simplex performs both phases. However, both
variants can be used and, as our experiments show, the starting basis obtained from the
scheduler is useful to save iterations. In the few cases in which \prism\ did not provide
the optimal schedulers, the dual simplex required less
iterations than the primal one; both of them perform far better when starting from a basis
corresponding to a near-optimal scheduler than when starting from the default basis
(see Section~\ref{sec:exp-results}).

\subsection{Linear programming for Markov decision processes}
\label{subsec:subsec-lp-mdp}
Linear programming can be used to compute optimal probabilities for some of the states
in the system. The set of states whose maximum (minimum, resp.) probability is $0$ is first
calculated using graph-based techniques~\cite[Sec. 4.1]{DBLP:conf/sfm/ForejtKNP11}.
This qualitative calculation is often considered as a preprocessing step
before the proper quantitative model checking.
 Given a set of target states $\tsts$, let $\cannotreach$
be the set of states $\states$ such that
$\sup_{\sched} \prrhmi{\sched}{\tsts}{\mdp}{\state} = 0$.
Similarly, let $\canavoidreach$ be the set of states such that
$\inf_{\sched} \prrhmi{\sched}{\tsts}{\mdp}{\state} = 0$. When focusing on
maximum probabilities, we write the set
$\states \setminus (\cannotreach \cup \tsts)$
as $\maybe$ (called the set of \emph{maybe} states), while for
minimum probabilities $\maybe$ is
$\states \setminus (\canavoidreach \cup \tsts)$.

The maximum probabilities for $\state \in \cannotreach$ are $0$ by definition
of $\cannotreach$. For $\state \in \tsts$ the probabilities are $1$, since when starting
from a state in $\tsts$, the set $\tsts$ is reached in the initial state,
regardless of the scheduler. The minimum probabilities for $\state \canavoidreach$ are $0$ by definition
of $\canavoidreach$, and the probabilities for $\state \in \tsts$ are again $1$.
Next we show how to obtain the probabilities for the states in $\maybe$,
thus covering all the states in the system.

In order to avoid order issues, we assume that
the states are $\maybe = \state_{1}, \cdots, \state_{\allstates}$ and
the transitions are:
\begin{equation}
\label{eq:def-trans}
\trans = \tran_{1}, \cdots, \tran_{\alltran}
\end{equation}
in such a way that if $s_{i} = \tstate{\tran_{j}}$, $s_{i'} = \tstate{\tran_{j'}}$
and $i < i'$, then $j < j'$ (from
Def.~\ref{def:mdp}, recall that $\tstate{\tran_{i}}$ is the state in which $\tran_{i}$
is enabled). Roughly speaking, the transitions are ordered with respect to
the states in which they are enabled. From now on, we use this orderings
consistently throughout the paper.

In the following theorem, the matrix $A|I$ associated to a reachability problem
$\sup \prrhmi{\sched}{\tsts}{\mdp}{\state}$ is a $\alltran \times (\allstates + \alltran)$
matrix whose last $\alltran$ columns form the identity matrix.
We define of $A_{i,j}$ for the column $j \leq \allstates$ as:
$A_{i,j} = \tran_{i}(\state_{j})$ if $\state_{j} \not= \tstate{\tran_{i}}$,
or $A_{i,j} = \tran_{i}(\state_{j})-1$ if $\state_{j} = \tstate{\tran_{i}}$.
The vector $\vec{b}$ is defined as
$b_{i} = -\sum_{\state \in \tsts} \tran_{i}(\state)$.

\begin{theorem}
\label{thm:lp-problems}
For all states $\state_{i} \in \maybe$, the value
$\sup_{\sched} \prrhmi{\sched}{\tsts}{\mdp}{\state_{i}}$ is the
value of the variable $x_{i}$ in an optimal solution of the following \lptext\ problem:
\begin{equation}
\label{eq:sup-matrix-form}
\begin{array}{rl}
\min & (\overbrace{1,\cdots,1}^{\allstates},
                          \overbrace{0,\cdots,0}^{\alltran})\vec{x} \\
                             & (A \mid I^{\alltran\times\alltran})
                                        \vec{x} = \vec{b} \\
                              & \vec{x} \geq \vec{0} \; .
\end{array}
\end{equation}

Analogously, the value $\inf_{\sched} \prrhmi{\sched}{\tsts}{\mdp}{\state_{i}}$ is the
value of the variable $x_{i}$ in an optimal solution of the following \lptext\ problem.
\begin{equation}
\label{eq:inf-matrix-form}
\begin{array}{rl}
\min & -(\overbrace{1,\cdots,1}^{\allstates},
                          \overbrace{0,\cdots,0}^{\alltran})\vec{x} \\
                             & (-A \mid I^{\alltran\times\alltran})
                                        \vec{x} = -\vec{b} \\
                              & \vec{x} \geq \vec{0} \; .
\end{array}
\end{equation}
(Note that, in the constraint, the matrix $A$ is negated, while $I$ is not.)
\end{theorem}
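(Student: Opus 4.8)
The plan is to read each equality constraint of \eqref{eq:sup-matrix-form} as a Bellman inequality and then to invoke the classical least-fixed-point characterisation of maximum reachability probabilities. Write $p_i = \sup_{\sched} \prrhmi{\sched}{\tsts}{\mdp}{\state_i}$ for $\state_i \in \maybe$. Fix a row $i$ and let $\state_k = \tstate{\tran_i}$ be the source state of $\tran_i$. The definition of $A_{i,j}$, which subtracts $1$ exactly in the column $j = k$, together with $b_i = -\sum_{\state \in \tsts} \tran_i(\state)$, turns the $i$-th constraint of $(A \mid I)\vec{x} = \vec{b}$, after moving the slack variable $x_{\allstates + i} \ge 0$ to the right-hand side, into
\begin{equation}
\label{eq:bellman-row}
x_k = \sum_{j=1}^{\allstates} \tran_i(\state_j)\,x_j + \sum_{\state \in \tsts} \tran_i(\state) + x_{\allstates + i} \; .
\end{equation}
Reading $x_j$ as a candidate value for $\state_j$, and recalling that the states in $\cannotreach$ contribute value $0$ while those in $\tsts$ contribute $1$, the sum on the right of \eqref{eq:bellman-row} is exactly the one-step reachability value obtained by firing $\tran_i$. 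Since $x_{\allstates + i} \ge 0$, feasibility of \eqref{eq:sup-matrix-form} is therefore equivalent to the family of Bellman inequalities $x_k \ge \sum_{j} \tran_i(\state_j)\,x_j + \sum_{\state \in \tsts} \tran_i(\state)$, ranging over all $\tran_i \in \allenabled{\state_k}$ and all $\state_k \in \maybe$.

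Next I would introduce the monotone operator $F$ on $[0,\infty)^{\allstates}$ whose $k$-th component is the maximum, over $\tran_i \in \allenabled{\state_k}$, of the right-hand side of the inequality above, so that the feasible region of \eqref{eq:sup-matrix-form} projected onto its first $\allstates$ coordinates is exactly the set of pre-fixed points $\{\vec{x} \ge \vec{0} : \vec{x} \ge F(\vec{x})\}$. Two standard facts about \mdpstext\ then finish the argument (see, e.g., \cite[Chapter 3]{STAN//CS-TR-98-1601}). First, $\vec{p}$ is a fixed point of $F$: this is the optimality equation satisfied by the scheduler $\sched^{*}$ of \eqref{eq:optimal-sched-sup}, and it makes $\vec{p}$ feasible for \eqref{eq:sup-matrix-form} by setting each slack $x_{\allstates + i}$ to the non-negative residual $p_k - \sum_j \tran_i(\state_j)\,p_j - \sum_{\state \in \tsts} \tran_i(\state)$. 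Second, $\vec{p}$ is the \emph{least} such pre-fixed point: the iterates $F^{h}(\vec{0})$ are the maximum probabilities of reaching $\tsts$ within $h$ steps and satisfy $F^{h}(\vec{0}) \uparrow \vec{p}$, while any feasible $\vec{x}$ satisfies $\vec{x} \ge F(\vec{x}) \ge F^{h}(\vec{0})$ for every $h$ by monotonicity and induction, whence $\vec{x} \ge \vec{p}$ componentwise. Since the objective minimises $\sum_{j=1}^{\allstates} x_j$ over this region, its minimum is attained precisely when $x_j = p_j$ for all $1 \le j \le \allstates$, which is the desired conclusion.

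For the minimum problem \eqref{eq:inf-matrix-form} I would argue dually. Negating both $A$ and $\vec{b}$ reverses the sign with which the slack enters \eqref{eq:bellman-row}, so the $i$-th constraint becomes $x_k = \sum_j \tran_i(\state_j)\,x_j + \sum_{\state \in \tsts} \tran_i(\state) - x_{\allstates + i}$, and feasibility is equivalent to the reversed inequalities $x_k \le \sum_j \tran_i(\state_j)\,x_j + \sum_{\state \in \tsts} \tran_i(\state)$; the feasible region is now the set of post-fixed points of the corresponding minimising operator. The minimum reachability probabilities are the \emph{greatest} such solution, and since the objective $-\sum_j x_j$ maximises $\sum_j x_j$, the optimum again selects the probability vector. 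The main obstacle is matching the optimisation direction with the correct extremal (least versus greatest) fixed point, and in particular establishing the least-pre-fixed-point inequality $\vec{x} \ge \vec{p}$; this is where the preprocessing is used, since it guarantees that the constants $\sum_{\state \in \tsts} \tran_i(\state)$ correctly encode the contribution of $\tsts$ and that $\cannotreach$ has been removed, so that the Kleene-iteration argument applies on $\maybe$.
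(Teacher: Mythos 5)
The paper does not prove this theorem at all: it simply remarks that it is the well-known correspondence between reachability problems and \lptext\ problems and cites the standard references. Your proposal reconstructs exactly the standard argument from those references --- rewriting each row of $(A\mid I)\vec{x}=\vec{b}$ with its non-negative slack as the Bellman inequality $x_k \ge \sum_j \tran_i(\state_j)x_j + \sum_{\state\in\tsts}\tran_i(\state)$, identifying the feasible region with the non-negative pre-fixed points of the max-Bellman operator, and using $F^{h}(\vec{0})\uparrow\vec{p}$ to show $\vec{p}$ is the least such point, so that minimising $\sum_j x_j$ pins down every coordinate --- and it is correct. Two small remarks: for the maximum case the Kleene-iteration lower bound $\vec{x}\ge F^{h}(\vec{0})$ actually goes through with or without removing $\cannotreach$, so the preprocessing is not where the weight falls there; it is the minimum case where removing $\canavoidreach$ is genuinely needed, since only then does every scheduler reach $\tsts$ with positive probability from each state of $\maybe$, making the substochastic iteration matrices contract so that any post-fixed point satisfies $\vec{x}\le\vec{p}^{\sched}$ for every $\sched$ and hence $\vec{x}\le\vec{p}_{\min}$. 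You correctly flag this as the main obstacle but leave it at a gesture; spelling out that contraction argument would complete the dual half.
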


This theorem is just the well-known correspondence between reachability problems
and \lptext\ problems~\cite{puterman},\cite[Section 4.2]{DBLP:conf/sfm/ForejtKNP11},
written in our \lptext\ setting.

The variables that multiply the columns in the identity matrix are called \emph{slack}
variables in the \lptext\ literature. They are also the variables $x_{\tran}$ in the following
notation.

\begin{notation}
\label{not:states-tran-as-indices}
From now on, we identify each column $1 \leq j \leq \allstates$ of $(A|I)$
with the state $\state_{j}$, and each column $\allstates < j \leq \allstates + \alltran$
with the transition $\tran_{j}$. Each row $i$ is identified with $\tran_{i}$.
In consequence, we write $A_{\tran,\state}$ for the elements of the matrix,
and $x_{\state}$ or $x_{\tran}$ for the components of the solution $\vec{x}$.
\end{notation}

\section{A method for exact solutions}
\label{sec:method}

Our method serves as a complement to a model checker being able to:
\begin{compactitem}
\item{calculate the set $\maybe$, and}
\item{give a description of a scheduler, that the model checker \emph{considers}
optimal based on finite precision calculations}
\end{compactitem}
We only require a weak ``optimality'' condition on the scheduler returned
by the model checker, which we refer to as \emph{\apttext}: we say that a
scheduler $\sched$ is \apttext\ iff $\prrhmi{\sched}{\tsts}{\mdp}{\state} > 0$
for all $\state \in \maybe$. In order words, we only require the scheduler to reach
$\tsts$ for all states that can reach it (no matter with which probability). In the
case of minimum probabilities, every scheduler is \apttext, since if we have
$\prrhmi{\sched}{\tsts}{\mdp}{\state} = 0$ for some $\sched$, then
$\state \not\in \maybe$ (by definition of $\canavoidreach$). 
For the case of the maximum, the existence of an \apttext\ scheduler follows from the
definition of $\maybe$, the scheduler $\sched^{*}$ in~(\ref{eq:optimal-sched-sup})
being a suitable witness.

\begin{assumption}
\label{ass:apt}
We assume that the model checker is able to provide an \apttext\ scheduler, in the sense
that our method is not guaranteed to return a value in case the scheduler is not \apttext.
\end{assumption}

\newcommand{\ReachAnalysis}{\texttt{reach\_analysis}}
\newcommand{\ConstructProblem}{\texttt{construct\_problem}}
\newcommand{\ConstructBasis}{\texttt{construct\_basis}}
\newcommand{\StartSimplexSolver}{\texttt{start\_simplex\_solver}}
\newcommand{\Error}{\texttt{error}}

Our method is described in the Algorithm~\ref{alg:main-algorithm}.
\begin{algorithm}
\SetKwInOut{Input}{input}\SetKwInOut{Output}{output}
\Input{An \mdptext\ $\mdp$ and a set of states $\tsts$}
\Output{$\vec{x}$ such that $x_{\state} = \sup_{\sched} \prrhmi{\sched}{\tsts}{\mdp}{\state}$
				~~~~~~($\inf_{\sched} \prrhmi{\sched}{\tsts}{\mdp}{\state}$, resp.)
				for all $\state \in \maybe$}
				
 \BlankLine				
 \tcp{Use model checker to get the set $\maybe$ and a scheduler}
 \nllabel{line:get-scheduler}$(\maybe, \sched) \leftarrow$ \ReachAnalysis($\mdp$, $\tsts$)\;
 $\lpproblem \leftarrow$ \ConstructProblem($\mdp$, $\maybe$)\;
 $\bsched{\sched} \leftarrow$ \ConstructBasis($\lpproblem$, $\sched$)\;
 \StartSimplexSolver($\lpproblem$, $\bsched{\sched}$) \;
 \uIf{the exact simplex solver finishes in one iteration}{
      \Return $\arg \min_{\vec{x}} \lpproblem$, obtained from the solver\;
 }
 \uElseIf(\tcp*[f]{$\eta$ is not optimal}){the solver performs several iterations}{
      \Return $\arg \min_{\vec{x}} \lpproblem$, obtained from the solver once it finishes\;
      \tcp*[h]{Or interrupt the solver and change the model checker parameters}
 }
 \nllabel{line:basis-singular}\ElseIf{the solver reports that the basis is singular} {
      \tcp{For the minimum, this case cannot happen}
      \Error\ $\eta$ is not \apttext\;
 }
\caption{\label{alg:main-algorithm}Method to get exact solutions}
\end{algorithm}
The function \ConstructProblem constructs the \lptext\ problems (\ref{eq:sup-matrix-form}) and
(\ref{eq:inf-matrix-form}).
Given $\sched$, the basis $\bsched{\sched}$ obtained by
\ConstructBasis is defined as
\begin{equation}
\label{eq:def-basis}
 \state \in \bsched{\sched}, \quad \mbox{for all $\state \in \maybe$}
             \qquad \qquad x_{\tran} \in \bsched{\sched}
					\Longleftrightarrow \sched(\tstate{\tran}) \not= \tran \; .
\end{equation}
Roughly speaking, the basis contains all states, and all the transitions that are
\emph{not} chosen by $\sched$. Sometimes (particularly in the proof of Theorem~\ref{thm:apt-is-dual-feasible}) we write $\bsched{\mdp',\sched}$ to
make it clear that the basis belongs to an 
\mdptext\ $\mdp'$.

The rest of this section is devoted to prove the correctness of the algorithm,
in the sense made precise by the following theorem (which is proven later).

\begin{theorem}
\label{thm:alg-correct}
If the algorithm returns a value, then the value corresponds to the \textbf{output}
specification. Moreover, if the scheduler $\sched$ provided by the model checker is
\apttext, then the matrix defined by~\textup{(\ref{eq:def-basis})} is a basis, and the
algorithm returns optimum values from the \lptext\ solver. If the scheduler provided by the
model checker is optimum as in~\textup{(\ref{eq:optimal-sched-sup})}, then the basis
in~\textup{(\ref{eq:def-basis})} is both feasible and dual feasible.
\end{theorem}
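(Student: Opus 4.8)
The plan is to treat Theorem~\ref{thm:alg-correct} as a glue result that assembles three ingredients: the correspondence between the reachability problem and the \lptext\ problem (Theorem~\ref{thm:lp-problems}), the feasibility of the basis induced by an optimal scheduler (Theorem~\ref{thm:optimal-is-feasible}), and the non-singularity together with dual feasibility of the basis induced by an \apttext\ scheduler (Theorem~\ref{thm:apt-is-dual-feasible}). I would prove the three assertions of the statement in turn, in each case leaning on the correctness of the exact simplex solver: since the problems~(\ref{eq:sup-matrix-form}) and~(\ref{eq:inf-matrix-form}) are bounded and feasible, the solver started from any non-singular basis terminates with an optimal solution $\vec{x}$, regardless of whether that starting basis was itself feasible or dual feasible.

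For the first assertion, I would note that the algorithm returns a value only through the two branches in which the solver actually finishes (in one iteration, or after several). In both branches the returned $\vec{x}$ is an optimal solution of the \lptext\ problem produced by \ConstructProblem, so Theorem~\ref{thm:lp-problems} gives $x_{\state} = \sup_{\sched} \prrhmi{\sched}{\tsts}{\mdp}{\state}$ (respectively the $\inf$) for every $\state \in \maybe$, which is exactly the \textbf{output} specification. The point worth stressing is that this conclusion does not depend on the scheduler being optimal or even \apttext: the solver converges to the true optimum from whatever valid basis it is handed, so a returned value is always correct.

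For the second assertion, suppose $\sched$ is \apttext. By Theorem~\ref{thm:apt-is-dual-feasible} the matrix singled out by~(\ref{eq:def-basis}) is non-singular, hence a genuine basis; consequently the singular-basis branch (line~\ref{line:basis-singular}) cannot be taken, so the run falls into one of the two finishing branches and returns a value, which by the first assertion is the optimum. For the third assertion, an optimal scheduler as in~(\ref{eq:optimal-sched-sup}) is in particular \apttext, since by the definition of $\maybe$ it attains $\prrhmi{\sched}{\tsts}{\mdp}{\state} > 0$ for every $\state \in \maybe$; thus dual feasibility of the basis follows from Theorem~\ref{thm:apt-is-dual-feasible} and feasibility from Theorem~\ref{thm:optimal-is-feasible}, giving both properties at once.

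The bookkeeping above is routine, and I expect the main obstacle to sit not in this theorem but in the two supporting results it invokes. Establishing feasibility (Theorem~\ref{thm:optimal-is-feasible}) requires showing that the solution induced by the optimal scheduler's basis is not merely a solution of $A\vec{x} = \vec{b}$ — which \recallLemma{basic-satisfies-eq} grants for any basic solution — but genuinely satisfies $\vec{x} \geq \vec{0}$, which ties the algebra of $\basis^{-1}\vec{b}$ to the probabilistic meaning of the induced chain $\srestrict{\mdp}{\sched}$. Dual feasibility and, crucially, non-singularity (Theorem~\ref{thm:apt-is-dual-feasible}) is where the \apttext\ hypothesis does the heavy lifting: the requirement that $\sched$ reach $\tsts$ from every maybe-state is precisely what prevents the associated submatrix from being singular, and this reachability argument — rather than any manipulation internal to Theorem~\ref{thm:alg-correct} — is the step I would expect to demand the most care.
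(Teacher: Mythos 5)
Your proposal is correct and follows essentially the same route as the paper: it treats the theorem as an assembly of Theorem~\ref{thm:lp-problems} (returned values match the output specification), the non-singularity of the matrix in~(\ref{eq:def-basis}) for \apttext\ schedulers (which the paper cites as Lemma~\ref{lemma:matrix-non-singular} rather than folding it into Theorem~\ref{thm:apt-is-dual-feasible}, a purely presentational difference), and Theorems~\ref{thm:optimal-is-feasible} and~\ref{thm:apt-is-dual-feasible} for the optimal case, together with the observation that an optimal scheduler is \apttext. Your assessment that the real work lives in the supporting results, not in this glue argument, matches the paper.
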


Recall from Subsection~\ref{subsec:lin-prog} that the simplex algorithm stops as soon as
it finds a solution that is feasible and dual feasible. Hence, the fact that
an optimal scheduler yields a basic, feasible and dual feasible solution causes the
simplex solver to stop as soon as the feasibility checks are finished.

The rest of this section is devoted to prove Theorem~\ref{thm:alg-correct}.
In our proofs we resort to the
following definitions and lemmata. The first definition uses indices as explained in
Notation~\ref{not:states-tran-as-indices}.

\begin{definition}
\label{def:submatrices}
Given a scheduler $\sched$, we write the set of transitions complying with
$\sched(\tstate{\tran}) = \tran$ as $\trans_{\sched} =
   \{ \tran^{1}, \cdots, \tran^{\allstates} \}$, and we assume that this ordering respects
   the ordering in~(\ref{eq:def-trans}).
We define $\matrixsched{\sched}$ to be the
$\allstates \times \allstates$ matrix whose elements are as
$\matrixsched{\sched}_{i,j} = \tran^{i}(\state_{j})$.
Consider the matrix $A$ in~(\ref{eq:sup-matrix-form}).
We define $\pmatrixsched{\sched}$ to be the $\allstates \times \allstates$ submatrix of $A$
comprising all the rows $\tran\in \trans_{\sched}$ and the columns $\state$ for all
$\state \in \maybe$.
\end{definition}

\begin{lemma}
\label{lemma:matrix-are-chosen}
The transitions
$\tran^{i} \in \trans_{\sched}$ comply with
$\tstate{\tran^{i}} = \state_{i} \:$ for all $\state_{i} \in \maybe$.
In consequence, $\sched(\state_{i}) = \tran^{i}$.
\end{lemma}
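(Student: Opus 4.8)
The plan is to prove both parts by exploiting the bijection between the maybe states and the transitions selected by $\sched$, together with the monotone ordering fixed in~(\ref{eq:def-trans}).

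First I would establish that the map $\state_i \mapsto \sched(\state_i)$ is a bijection from $\maybe$ onto $\trans_{\sched}$. For the forward direction, note that for every $\state_i \in \maybe$ the chosen transition $\sched(\state_i)$ is enabled at $\state_i$, so $\tstate{\sched(\state_i)} = \state_i$ and hence $\sched(\tstate{\sched(\state_i)}) = \sched(\state_i)$, which is exactly the membership condition $\sched(\tstate{\tran}) = \tran$ defining $\trans_{\sched}$. Conversely, any $\tran \in \trans_{\sched}$ is the transition $\sched$ chooses at its source state $\tstate{\tran}$, so it is the image of that state under the map. Injectivity is immediate, since distinct source states yield distinct transitions, and surjectivity onto $\trans_{\sched}$ follows from the membership condition. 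This gives $\card{\trans_{\sched}} = \allstates$, consistent with the indexing $\tran^{1}, \cdots, \tran^{\allstates}$.

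Next I would invoke the ordering assumption. By the bijection, $\trans_{\sched}$ contains exactly one transition with source $\state_i$ for each $\state_i \in \maybe$. Since the listing of $\trans_{\sched}$ is required to respect~(\ref{eq:def-trans}), and that ordering sorts transitions by the index of their source state, sorting the $\allstates$ elements of $\trans_{\sched}$ places the unique transition with source $\state_1$ first, then the one with source $\state_2$, and so on. Hence the $i$-th element $\tran^{i}$ satisfies $\tstate{\tran^{i}} = \state_i$, which is the first claim. The second claim is then immediate: since $\tran^{i} \in \trans_{\sched}$ means $\sched(\tstate{\tran^{i}}) = \tran^{i}$, substituting $\tstate{\tran^{i}} = \state_i$ gives $\sched(\state_i) = \tran^{i}$.

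I expect the only delicate point to be the ordering step: one must verify that the monotone condition in~(\ref{eq:def-trans}), stated for the full list $\tran_{1}, \cdots, \tran_{\alltran}$, forces the one-per-state selection $\trans_{\sched}$ to be listed strictly in order of its source states. This is precisely where having exactly one chosen transition per state (from the bijection) is essential — without it the induced order on $\trans_{\sched}$ would not be determined by the source-state index alone. Everything else is a direct unwinding of the definitions of $\sched$ and $\trans_{\sched}$.
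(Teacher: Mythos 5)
Your proof is correct and takes essentially the same route as the paper's: the paper likewise observes that the ordering makes the sequence of source states $\tstate{\tran^{1}},\dots,\tstate{\tran^{\allstates}}$ nondecreasing and then uses the fact that $\sched$ selects exactly one transition per state in $\maybe$ to force this sequence to be $\state_{1},\dots,\state_{\allstates}$, after which the second claim follows by substitution. Your explicit bijection argument is just a more detailed phrasing of the paper's one-transition-per-state counting step.
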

\begin{proof}
Note that since the order in $\trans_{\sched}$ respects the order in~(\ref{eq:def-trans}),
we have that the sequence
$\tstate{\tran^{1}}, \cdots, \allowbreak \tstate{\tran^{\allstates}}$ is a sequence of states
$\state_{j_{1}}, \cdots, \state_{j_{\allstates}}$ with $j_{1} \leq \cdots \leq j_{\allstates}$.
Since there are $\allstates$ states, and for each state $\state$ we have exactly one transition
$\tran$ such that $\sched(\state) = \tran$, it must be
$s_{j_{1}} = s_{1}, \cdots, s_{j_{\allstates}} = s_{\allstates}$. This implies 
$\tstate{\tran^{i}} = \state_{j_{i}} = \state_{i}$ as desired.
Using this equality and $\tran^{i} \in \trans_{\sched}$ we have
$\sched(\state_{i}) = \sched(\tstate{\tran^{i}}) = \tran^{i}$.
\end{proof}

\begin{lemma}
\label{lemma:pmatrix-matrix}%
For all $\sched$, we have $\pmatrixsched{\sched} = \matrixsched{\sched} - I$.
\end{lemma}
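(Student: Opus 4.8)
The plan is to prove the matrix identity entrywise, by showing that the $(i,j)$ entry of $\pmatrixsched{\sched}$ agrees with the $(i,j)$ entry of $\matrixsched{\sched} - I$ for all $1 \le i,j \le \allstates$. First I would fix such indices and read off the $(i,j)$ entry of $\pmatrixsched{\sched}$. By $\recallDefinition{submatrices}$, this entry is the element of $A$ sitting in the row indexed by the transition $\tran^{i} \in \trans_{\sched}$ and in the column indexed by the state $\state_{j} \in \maybe$. Substituting into the definition of $A$ given just before Theorem~\ref{thm:lp-problems}, this entry equals $\tran^{i}(\state_{j})$ when $\state_{j} \neq \tstate{\tran^{i}}$, and equals $\tran^{i}(\state_{j}) - 1$ when $\state_{j} = \tstate{\tran^{i}}$.

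The key step is to pin down precisely when the ``$-1$'' correction fires. Here I would invoke $\recallLemma{matrix-are-chosen}$, which gives $\tstate{\tran^{i}} = \state_{i}$. Consequently the condition $\state_{j} = \tstate{\tran^{i}}$ is equivalent to $\state_{j} = \state_{i}$, and since $\state_{1}, \dots, \state_{\allstates}$ is an enumeration of the distinct states of $\maybe$, this holds exactly when $i = j$. Thus the off-diagonal entries of $\pmatrixsched{\sched}$ are $\tran^{i}(\state_{j})$, while the diagonal entries are $\tran^{i}(\state_{i}) - 1$.

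It then remains to compare with $\matrixsched{\sched} - I$. By $\recallDefinition{submatrices}$ we have $\matrixsched{\sched}_{i,j} = \tran^{i}(\state_{j})$, so $(\matrixsched{\sched} - I)_{i,j}$ equals $\tran^{i}(\state_{j})$ for $i \neq j$ and $\tran^{i}(\state_{i}) - 1$ for $i = j$. These coincide with the entries of $\pmatrixsched{\sched}$ computed above, which establishes $\pmatrixsched{\sched} = \matrixsched{\sched} - I$.

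This lemma is essentially bookkeeping, so there is no deep obstacle; the only real care is in correctly reading which row and column of $A$ corresponds to the $(i,j)$ position of the submatrix $\pmatrixsched{\sched}$, and in using $\recallLemma{matrix-are-chosen}$ to align the row index $i$ (a transition in $\trans_{\sched}$) with the state index $i$. Once those identifications are in place, the $-I$ on the diagonal falls out immediately from the single-state correction built into the definition of $A$.
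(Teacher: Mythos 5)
Your proof is correct and follows essentially the same route as the paper: compute the $(i,j)$ entry of $\pmatrixsched{\sched}$ from the definition of $A$, then use Lemma~\ref{lemma:matrix-are-chosen} to show the $-1$ correction occurs exactly on the diagonal, so that the correction term is the identity matrix. The paper merely packages the correction as an auxiliary matrix $Q_{i,j}$ and observes $Q = I$; the content is identical.
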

\begin{proof}
By definition of $\pmatrixsched{\sched}$ and the definition of the matrix $A$
in~(\ref{eq:sup-matrix-form}) we have
$\pmatrixsched{\sched}_{i,j} = A_{\tran^{i},s_{j}} = \tran^{i}(\state_{j}) - Q_{i,j}$,
where $Q_{i,j} = 1$ if $\tstate{\tran^{i}} = \state_{j}$, or otherwise $Q_{i,j} = 0$.
By Lemma~\ref{lemma:matrix-are-chosen}, we have
$\tstate{\tran^{i}} = \state_{j}$ iff $i=j$. Hence $Q_{i,j}$ is the identity matrix and
$\pmatrixsched{\sched}_{i,j} = \tran^{i}(\state_{j}) - I_{i,j} = \matrixsched{\sched}_{i,j} - I_{i,j}$, which completes the proof.
\end{proof}

The matrix $\pmatrixsched{\sched}$ happens to be very important in our proofs. We profit
from the fact that it is non-singular provided that $\sched$ is \apttext.

\begin{lemma}
\label{lemma:sched-mat-non-singular}
For all \apttext\ $\sched$, the matrix $\pmatrixsched{\sched}$ is non-singular.
\end{lemma}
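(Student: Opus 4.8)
The plan is to prove that $\pmatrixsched{\sched}$ is non-singular by showing that the powers of the substochastic matrix $\matrixsched{\sched}$ tend to zero. By \recallLemma{pmatrix-matrix} we have $\pmatrixsched{\sched} = \matrixsched{\sched} - I$, so it suffices to prove that $I - \matrixsched{\sched}$ is invertible; I will write $P = \matrixsched{\sched}$ for brevity. First I would record the probabilistic meaning of $P$: by \recallLemma{matrix-are-chosen} the $i$-th chosen transition satisfies $\tran^{i} = \sched(\state_{i})$, so $P_{i,j} = \tran^{i}(\state_{j})$ is exactly the one-step transition probability from $\state_{i}$ to $\state_{j}$ in the Markov chain $\srestrict{\mdp}{\sched}$ restricted to the states of $\maybe$. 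Hence $P$ is nonnegative with row sums at most $1$ (the defect of each row being the probability of leaving $\maybe$ in one step), and $(P^{n})_{i,j}$ is the probability that a trajectory of $\srestrict{\mdp}{\sched}$ started at $\state_{i}$ lands at $\state_{j}$ after $n$ steps while remaining inside $\maybe$ throughout. Writing $\mathbf{1}$ for the all-ones vector, $(P^{n}\mathbf{1})_{i}$ is therefore the probability of staying in $\maybe$ for the first $n$ steps from $\state_{i}$, and this quantity is non-increasing in $n$.

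The crux is to show $P^{n} \to 0$, which amounts to proving that, under an apt scheduler, the probability of remaining in $\maybe$ forever is $0$. Here I would combine the apt hypothesis with finiteness of $\states$. For each $\state \in \maybe$ the condition $\prrhmi{\sched}{\tsts}{\mdp}{\state} > 0$, equivalently $\prrhminos{\tsts}{\srestrict{\mdp}{\sched}}{\state} > 0$ by~(\ref{eq:rest-mc-eq-probs}), yields a finite positive-probability path of $\srestrict{\mdp}{\sched}$ from $\state$ to $\tsts$; taking a loop-free such path bounds its length by $m := \card{\states}$, and since $\tsts \cap \maybe = \emptyset$ this path must exit $\maybe$ within $m$ steps. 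Letting $\epsilon$ be the minimum over the finitely many states $\state \in \maybe$ of the probabilities of these witnessing paths, I obtain $\epsilon > 0$ together with a uniform bound: from every state of $\maybe$ the probability of escaping $\maybe$ within $m$ steps is at least $\epsilon$. A standard induction using the Markov property then gives $(P^{km}\mathbf{1})_{i} \leq (1-\epsilon)^{k}$, and monotonicity in $n$ forces $P^{n}\mathbf{1} \to 0$; since $P^{n} \geq 0$ entrywise, each entry is squeezed to $0$, so $P^{n} \to 0$.

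Finally I would conclude with a short kernel argument: if $\pmatrixsched{\sched}\mathbf{v} = 0$, i.e. $P\mathbf{v} = \mathbf{v}$, then $\mathbf{v} = P^{n}\mathbf{v}$ for every $n$, whence $\mathbf{v} = \lim_{n} P^{n}\mathbf{v} = 0$. Thus $\pmatrixsched{\sched}$ has trivial kernel and is non-singular. I expect the main obstacle to be the step converting the pointwise apt condition into the convergence $P^{n} \to 0$: one must argue that reaching $\tsts$ genuinely forces the trajectory to leave $\maybe$, and then package the per-state positive escape probabilities into a single geometric decay rate, which is precisely where the finiteness of $\maybe$ is essential (an individual row could have escape probability arbitrarily close to $0$, but the minimum over finitely many states stays bounded away from $0$).
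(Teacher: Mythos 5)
Your proof is correct, and it rests on the same two pillars as the paper's: the identity $\pmatrixsched{\sched} = \matrixsched{\sched} - I$ from Lemma~\ref{lemma:pmatrix-matrix}, and the fact that aptness supplies, from every state of $\maybe$, a positive-probability loop-free path of $\srestrict{\mdp}{\sched}$ into $\tsts$ (hence out of $\maybe$) of length at most $\card{\states}$. The packaging, however, is genuinely different. You prove the stronger intermediate statement that $(\matrixsched{\sched})^{n} \to 0$, by taking the minimum $\epsilon > 0$ of the witnessing-path probabilities over the finitely many states of $\maybe$ and extracting the uniform geometric bound $((\matrixsched{\sched})^{km}\mathbf{1})_{i} \leq (1-\epsilon)^{k}$; triviality of the kernel of $I - \matrixsched{\sched}$ then follows by letting $n \to \infty$ in $\vec{v} = (\matrixsched{\sched})^{n}\vec{v}$. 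The paper never establishes convergence of the powers: it argues by contradiction with a discrete maximum principle, showing by induction on the length of the witnessing path from $\state_{j}$ that $\abs{((\matrixsched{\sched})^{z}\vec{x})_{j}} < \max_{\pstate}\abs{x_{\pstate}}$ for some $z$ depending on $j$, and then evaluating at the coordinate where the maximum is attained. The paper's route is more local — it needs only a strict inequality at one coordinate for one power, and no uniformization of the escape probability over states — while yours buys a stronger conclusion (the spectral radius of $\matrixsched{\sched}$ is strictly below $1$, so the Neumann series for $(I-\matrixsched{\sched})^{-1}$ converges, which also recovers the probabilistic meaning of the induced solution exploited later in Theorem~\ref{thm:optimal-is-feasible}) at the cost of the extra bookkeeping needed to make the per-state escape probabilities uniform. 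Both arguments use finiteness of $\maybe$ in an essential way, yours through the minimum $\epsilon$, the paper's through the existence of the maximizing coordinate.
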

\begin{proof}
Suppose, towards a contradiction, that there exists $\vec{x} \not= \vec{0}$
such that $\pmatrixsched{\sched}\vec{x} = \vec{0}$. Then, by Lemma~\ref{lemma:pmatrix-matrix},
we have $(\matrixsched{\sched} - I)\vec{x} = \vec{0}$, which implies
$\matrixsched{\sched}\vec{x} = \vec{x}$ and hence $(\matrixsched{\sched})^{z}\vec{x} = \vec{x}$
for all $z \geq 0$. We arrive to a contradiction by showing that for all $j$ there exists
$z$ such that
\begin{equation}
\label{eq:contr-sched-mat-non-singular}
 \abs{((\matrixsched{\sched})^{z}\vec{x})_{j}} < \max_{\pstate} \abs{x_{\pstate}} \; .
\end{equation}
In particular, for $q = \arg \max_{\pstate} \abs{x_{\pstate}}$ this yields
$\abs{((\matrixsched{\sched})^{z}\vec{x})_{q}} < \abs{x_{q}}$, which contradicts
$(\matrixsched{\sched})^{z}\vec{x} = \vec{x}$.

Now we prove~(\ref{eq:contr-sched-mat-non-singular}). Since $\sched$ is \apttext,
from every $\state_{j} \in \maybe$ there exists a path
$\pat \in \pathsched{\sched}{\state_{j}}$ with $\last{\pat} \in \tsts$, such that all
the states previous to $\last{\pat}$ are not in $\tsts$. We prove that
$z$ can be taken to be $\pathlen{\pat}$.
We proceed by induction on the length of
$\pat$. If $\pathlen{\pat} = 1$, by Lemma~\ref{lemma:matrix-are-chosen} we have
$\sched(\state_{j})(u) = \tran^{j}(u) > 0$ for some
$u \in \tsts$, and hence\footnote{The result for discounted \mdpstext\ does not
use $\maybe$ as the analogous of this sum is always less than $1$ due to the discounts}
$\sum_{\statep \in \maybe} \tran^{j}(\statep) < 1$.
Taking $z=1$ we obtain \[ \abs{(\matrixsched{\sched}\vec{x})_{j}}
            = \abs{\sum_{\statep \in \maybe} \tran^{j}(\statep) x_{\statep}}
			\leq \sum_{\statep \in \maybe} \tran^{j}(\statep) \abs{x_{\statep}}
			\leq \sum_{\statep \in \maybe} \tran^{j}(\statep)
			                       \max_{\pstate}  \abs{x_{\pstate}}
			< \max_{\pstate} \abs{x_{\pstate}} \; , \]
which proves that we can take $z=1=\pathlen{\pat}$. The last strict inequality holds only if
$\max_{\pstate} \abs{x_{\pstate}} > 0$, which follows from $\vec{x} \not= \vec{0}$.

If $\pathlen{\pat} = l+1$, there exists $\state_{q} \in \maybe$ such that
$\tran^{j}(\state_{q}) > 0$
and $q$ reaches $\tsts$ in $l$ steps. The inductive hypothesis holds for $q$, and
hence
$\abs{((\matrixsched{\sched})^{l}\vec{x})_{q}} < \max_{\pstate} \abs{x_{\pstate}}$,
from which we obtain:
\begin{multline*}
\abs{((\matrixsched{\sched})^{l+1}\vec{x})_{j}}
=
\abs{(\matrixsched{\sched}(\matrixsched{\sched})^{l}\vec{x})_{j}}
			\leq \sum_{\statep \in \maybe \setminus \{ s_{q} \}}
			       \tran^{j}(\statep) \: \abs{((\matrixsched{\sched})^{l}\vec{x})_{\statep}}
			       \; + \; \tran^{j}(s_{q}) \: \abs{((\matrixsched{\sched})^{l}\vec{x})_{q}}
			\\
			= \sum_{\statep \in \maybe \setminus \{ s_{q} \} } \tran^{j}(\statep)
			                       \: \abs{x_{\statep}}
			      \; + \; \tran^{j}(s_{q}) \: \abs{((\matrixsched{\sched})^{l}\vec{x})_{q}}
			\leq 
			      \sum_{\statep \in \maybe \setminus \{ s_{q} \} } \tran^{j}(\statep)
			                       \max_{\pstate}  \abs{x_{\pstate}}
			      \; + \; \tran^{j}(s_{q}) \: \abs{((\matrixsched{\sched})^{l}\vec{x})_{q}}
		    \\ < \sum_{\statep \in \maybe \setminus \{ s_{q} \}} \tran^{j}(\statep)
			                       \max_{\pstate}  \abs{x_{\pstate}}
			       \; + \; \tran^{j}(s_{q}) \max_{\pstate} \abs{x_{\pstate}}
			\leq \max_{\pstate} \abs{x_{\pstate}}
\end{multline*}
This finishes the proof of~(\ref{eq:contr-sched-mat-non-singular}). Assuming that 
$\pmatrixsched{\sched}\vec{x} = \vec{0}$ for some $\vec{x} \not= \vec{0}$,
we derived~(\ref{eq:contr-sched-mat-non-singular}), which contradicts $(\matrixsched{\sched})^{z}\vec{x} = \vec{x}$ for all $z \geq 0$, thus finishing
the proof.
\end{proof}

\begin{lemma}
\label{lemma:matrix-non-singular}
For all \apttext\ schedulers $\sched$, the basis defined in~(\ref{eq:def-basis})
is non-singular.
\end{lemma}
\begin{proof}
We show that the equation
$\bsched{\sched}\vec{x} = \vec{0}$ holds only if $\vec{x} = \vec{0}$.
Note that the vector $\vec{x}$ has one component for each column of the basis,
that is, one component for each state in $\maybe$ (called $x_{\state}$), and one component for
each transition such that $\sched(\tstate{\tran}) \not= \tran$ (called $x_{\tran}$).
The matrix equation $\bsched{\sched}\vec{x} = \vec{0}$ corresponds to $\alltran$ equations,
one for each transition. If $\tran \in \bsched{\sched}$, since $\statep \in \bsched{\sched}$
for all $\statep \in \maybe$, the equation corresponding to $\tran$ is
\begin{equation}
\label{eq:restrict-not-chosen}
 \sum_{\statep \in \maybe}
    	   A_{\tran,\statep} x_{\statep} + x_{\tran}  = 0 \; .
\end{equation}
If $\tran \not\in \bsched{\sched}$, the corresponding equation is
\begin{equation}
\label{eq:restrict-chosen}
 \sum_{\statep \in \maybe}
    	   A_{\tran,\statep} x_{\statep}  = 0 \; .
\end{equation}
(Note that the sum term $x_{\tran}$ has disappeared. This corresponds to the
fact that the column corresponding to $x_{\tran}$ is not in the basis.)
Since the transitions $\tran \not\in \bsched{\sched}$ are those such that
$\sched(\tstate{\tran}) = \tran$, the set of equations~(\ref{eq:restrict-chosen})
is equivalent to $\pmatrixsched{\sched}\vec{s} = \vec{0}$, where $\vec{s}$ is the
subvector of $\vec{x}$ having only the components corresponding to states. In consequence, if
$\bsched{\sched}\vec{x} = \vec{0}$ holds, then in particular
$\pmatrixsched{\sched}\vec{s} = \vec{0}$ and, since $\sched$ is \apttext, by
Lemma~\ref{lemma:sched-mat-non-singular} it must be $\vec{s} = \vec{0}$, that is,
$x_{\statep} = 0$ for all $\statep \in \maybe$. Using this
in~(\ref{eq:restrict-not-chosen}) we have $x_{\tran}  = 0$ for all
$\tran \in \bsched{\sched}$. We have proven $x_{j} = 0$ for every component
$j$ of $\vec{x}$, thus showing $\vec{x} = \vec{0}$.
\end{proof}

\begin{theorem}
\label{thm:optimal-is-feasible}
If a scheduler is optimal as in~\textup{(\ref{eq:optimal-sched-sup})}
(or~\textup{(\ref{eq:optimal-sched-inf}}), resp.) then the solution
induced by the basis $\bsched{\sched}$ is feasible.
\end{theorem}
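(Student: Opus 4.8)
The plan is to prove feasibility by showing that the induced solution $\vec{x}$ is non-negative componentwise; the other half of feasibility, $A\vec{x}=\vec{b}$, is automatic from Lemma~\ref{lemma:basic-satisfies-eq} once we know $\vec{x}$ is basic. I will carry out the maximum case in detail and indicate the symmetric modifications for the minimum. First I would record that an optimal scheduler is \apttext: by definition of $\maybe$ we have $\sup_{\sched}\prrhmi{\sched}{\tsts}{\mdp}{\state} > 0$ for every $\state\in\maybe$, and an optimal $\sched$ attains this supremum, so $\prrhmi{\sched}{\tsts}{\mdp}{\state} > 0$ there. Hence Lemma~\ref{lemma:matrix-non-singular} applies, $\bsched{\sched}$ is genuinely a basis, and the induced solution $\vec{x}$ is well defined.

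Next I would identify the state components of $\vec{x}$ with reachability probabilities. Write $p_{\state} = \prrhmi{\sched}{\tsts}{\mdp}{\state}$ for $\state\in\maybe$ and let $\vec{s}$ be the subvector of state components of $\vec{x}$. In the induced solution the non-basic variables vanish, and by~(\ref{eq:def-basis}) the non-basic transition variables are exactly the $x_{\tran}$ with $\sched(\tstate{\tran})=\tran$. Restricting $A\vec{x}=\vec{b}$ to the rows indexed by these chosen transitions, the slack term $x_{\tran}$ is $0$, so those $\allstates$ rows read $\pmatrixsched{\sched}\,\vec{s} = \vec{b}'$, where $\vec{b}'$ collects the corresponding entries of $\vec{b}$. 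The standard reachability equations for the Markov chain $\srestrict{\mdp}{\sched}$ — using $p_{\state}=1$ on $\tsts$ and $p_{\state}=0$ on the zero-probability set ($\cannotreach$ for the maximum, $\canavoidreach$ for the minimum) — are precisely this system with $\vec{p}$ in place of $\vec{s}$; since $\pmatrixsched{\sched}$ is non-singular by Lemma~\ref{lemma:sched-mat-non-singular}, the two solutions coincide, giving $x_{\state}=p_{\state}\geq 0$ for all $\state\in\maybe$.

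It then remains to handle the basic slack variables $x_{\tran}$ with $\sched(\tstate{\tran})\neq\tran$. Writing $\state_{k}=\tstate{\tran}$ and solving the corresponding row of $A\vec{x}=\vec{b}$ for $x_{\tran}$, a short computation using the definition of $A$ (the $-1$ on the diagonal column $\state_{k}$) and of $\vec{b}$ gives
\[ x_{\tran} = p_{\state_{k}} - \Big( \sum_{\state\in\tsts}\tran(\state) + \sum_{\statep\in\maybe}\tran(\statep)\,p_{\statep} \Big). \]
The bracketed quantity is the one-step value of first taking $\tran$ and then continuing with the probabilities $\vec{p}$. This is the crux: optimality of $\sched$ for the maximum means $p_{\state_{k}}$ equals the Bellman optimal value, namely the bracketed term maximized over $\tran'\in\enabledt{\state_{k}}$, which dominates the bracketed term itself, so $x_{\tran}\geq 0$. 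For the minimum the LP uses $(-A\mid I)$ and $-\vec{b}$; the same computation yields $x_{\tran} = \big(\sum_{\state\in\tsts}\tran(\state)+\sum_{\statep\in\maybe}\tran(\statep)p_{\statep}\big) - p_{\state_{k}}$, and Bellman optimality for the minimum ($p_{\state_{k}}$ is no larger than the one-step value of any enabled transition) again forces $x_{\tran}\geq 0$.

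Having shown that every component of $\vec{x}$ is non-negative, and that $A\vec{x}=\vec{b}$ holds by Lemma~\ref{lemma:basic-satisfies-eq}, I conclude that the induced solution is feasible. The main obstacle I expect is the sign bookkeeping in the two displayed identities — matching the diagonal $-1$ of $A$ and the sign convention of $\vec{b}$ to the reachability/Bellman equation — and, crucially, invoking genuine scheduler \emph{optimality} (not merely aptness) at exactly the inequality that compares $p_{\state_{k}}$ with the one-step value, since aptness alone suffices only for the state components to be probabilities, not for the slacks to be non-negative.
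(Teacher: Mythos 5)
Your proposal is correct and follows essentially the same route as the paper's proof: it identifies the state components of the induced solution with the reachability probabilities under $\sched$ by exploiting the uniqueness of the solution to the non-singular system $\pmatrixsched{\sched}\vec{s}=\vec{b}'$ (Lemma~\ref{lemma:sched-mat-non-singular}), and then derives non-negativity of the basic slack variables from the Bellman optimality equation~(\ref{eq:sched-chooses-max}) (resp.~(\ref{eq:sched-chooses-min})), with $A\vec{x}=\vec{b}$ supplied by Lemma~\ref{lemma:basic-satisfies-eq}. The only cosmetic difference is that you phrase the identification via the reachability equations of the Markov chain $\srestrict{\mdp}{\sched}$ rather than via the argmax form of the Bellman equation, which amounts to the same linear system.
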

\begin{proof}
Let $\vec{x}$ be the solution induced
by $\bsched{\sched}$ for some optimal $\sched$. By
Lemma~\ref{lemma:basic-satisfies-eq}, we need to prove
$\vec{x} \geq \vec{0}$. We prove this inequality by showing that
$x_{\state} = \prrhmi{\sched}{\tsts}{\mdp}{\state} \geq 0$ for all
$\state$ and $x_{\tran} \geq 0$ for all $\tran$.

Since in $\bsched{\sched}$ the variables $x_{\tran} \in \trans_{\sched}$
are non basic, in the solution $\vec{x}^{\sched}$ induced by
$\bsched{\sched}$ we have $x_{\tran}=0$ for
all $\tran \in \trans_{\sched}$. Then, using
Lemma~\ref{lemma:basic-satisfies-eq} for our particular constraint
matrix $A|I$, we obtain
\begin{equation}
\label{eq:values-sched}
 x_{\state} = \sum_{\statep \in \maybe} \sched(\state)(\statep)
      \: x_{\statep} \; + \; \sum_{\statep \in \tsts} \sched(\state)(\statep) \; .
\end{equation}
This is equivalent to $\pmatrixsched{\sched}\vec{x} = \vec{q}$ for
some vector $\vec{q}$. By Lemma~\ref{lemma:sched-mat-non-singular},
there exists exactly one $\vec{x}$ satisfying~(\ref{eq:values-sched}).
Let $v^{\sched}_{\state}$ be
$\prrhmi{\sched}{\tsts}{\mdp}{\state}$. 
A classic result for
\mdpstext\ (see, for
instance,~\cite[Section 4.2]{DBLP:conf/sfm/ForejtKNP11},
\cite[Theorem 3.10]{STAN//CS-TR-98-1601}) states that, for
an optimal scheduler $\sched$, it holds
\begin{equation}
\label{eq:sched-chooses-max}
 v^{\sched}_{\state} = \max_{\tran \in \enabledt{\state}}
         \sum_{\statep \in \maybe} \tran(\statep) v^{\sched}_{\statep} +  
                   \sum_{\statep \in \tsts} \tran(\statep)
\end{equation}
and
\[ \sched(\state) \in \arg \max_{\tran \in \enabledt{\state}}
        \sum_{\statep \in \maybe} \tran(\statep) v^{\sched}_{\statep}  +  
                   \sum_{\statep \in \tsts} \tran(\statep) \; . \]
for all states $\state$. From the last two equations:
\[ v^{\sched}_{\state} = 
     \sum_{\statep \in \maybe} \sched\!(\state)(\statep) \: v^{\sched}_{\statep}  \; +  
             \; \sum_{\statep \in \tsts} \sched\!(\state)(\statep) \; . \]
This is equivalent to
$\pmatrixsched{\sched}\vec{v^{\sched}} = \vec{q}$ as before.
After~(\ref{eq:values-sched}) we have seen that this equation has a
unique solution, and so $x_{\state} = v^{\sched}_{\state}$ for all
$\state \in \maybe$. By~(\ref{eq:sched-chooses-max}) we have
\begin{equation}
\label{eq:ineq-state}
 x_{\state} \geq \sum_{\statep \in \maybe} \tran(\statep) \:
       x_{\statep} \; + \; \sum_{\statep \in \tsts} \tran(\statep)
\end{equation}
for all $\state \in \maybe$, $\tran \in \enabledt{\state}$.
Applying Lemma~\ref{lemma:basic-satisfies-eq} to our particular
constraint matrix $A|I$, we have
\[ x_{\tran} = x_{\state} - \sum_{\statep \in \maybe} \tran(\statep)
       x_{\statep} - \sum_{\statep \in \tsts} \tran(\statep) \; . \]
Hence, $x_{\tran} \geq 0$ for all $\tran$ by~(\ref{eq:ineq-state}).
In conclusion,
$x_{\state} =  v^{\sched}_{\state} \geq 0$ for all $\state \in \maybe$
and $x_{\tran} \geq 0$ for all $\tran$. Then, the solution $\vec{x}$
induced by $\bsched{\sched}$ is feasible.

For the case of the minimum, the analogue of~(\ref{eq:sched-chooses-max})
is:
\begin{equation}
\label{eq:sched-chooses-min}
 v^{\sched}_{\state} = \min_{\tran \in \enabledt{\state}}
         \sum_{\statep \in \maybe} \tran(\statep) v^{\sched}_{\statep} +  
                   \sum_{\statep \in \tsts} \tran(\statep)
\end{equation}
The fact that the equation
$\pmatrixsched{\sched}\vec{v^{\sched}} = \vec{q}$ has a unique solution
again yields $x_{\state} = \vec{v^{\sched}}$. For $x_{\tran}$,
using the constraint matrix $-A|I$ for the minimum and~(\ref{eq:sched-chooses-min}) we obtain
\[ x_{\tran} = -x_{\state} + \sum_{\statep \in \maybe} \tran(\statep)
       x_{\statep} + \sum_{\statep \in \tsts} \tran(\statep)
       = \sum_{\statep \in \tsts} \tran(\statep) + \sum_{\statep \in \maybe} \tran(\statep)
        - x_{\statep} \geq 0 \; . \]
\end{proof}

\begin{theorem}
\label{thm:apt-is-dual-feasible}
Given an \apttext\ scheduler $\sched$, the solution induced by the basis $\bsched{\sched}$
is dual feasible. (For the definition of \emph{dual feasible} see
Subsection~\ref{subsec:lin-prog}.)
\end{theorem}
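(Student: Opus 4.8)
The plan is to compute the reduced cost of every non-basic variable directly and show it is non-negative. By the definition of $\bsched{\sched}$ in~\textup{(\ref{eq:def-basis})}, the non-basic columns $\nbasis$ are exactly the transition columns $x_{\tran}$ with $\tran\in\trans_{\sched}$, i.e.\ the transitions chosen by $\sched$. For such a column the cost is $c_{\tran}=0$ and, since it sits in the identity block of $(A\mid I)$, the column $A_{\tran}$ is the unit vector with a $1$ in row $\tran$. Writing the dual row vector $\vec{y}=\vec{c}^{\basis}\basis^{-1}$, the reduced cost collapses to $\overline{c_{\tran}}=c_{\tran}-\vec{y}A_{\tran}=-y_{\tran}$, so the whole statement reduces to proving $y_{\tran}\le 0$ for every $\tran\in\trans_{\sched}$.

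To pin down $\vec{y}$ I would read the defining identity $\vec{y}\basis=\vec{c}^{\basis}$ column by column. The transition columns of $\basis$ correspond to the non-chosen transitions $\tranp\notin\trans_{\sched}$; they are unit vectors of cost $0$, so they force $y_{\tranp}=0$, and only the components indexed by $\trans_{\sched}$ survive. The state columns carry cost $1$ and give $\sum_{\tran}y_{\tran}A_{\tran,\state}=1$ for each $\state\in\maybe$; discarding the vanishing components and using $\pmatrixsched{\sched}_{i,j}=A_{\tran^{i},\state_{j}}$ from \recallDefinition{submatrices}, this becomes the row-vector equation $\vec{y}_{\trans_{\sched}}\,\pmatrixsched{\sched}=\vec{1}^{\top}$. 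Since $\sched$ is \apttext, \recallLemma{sched-mat-non-singular} makes $\pmatrixsched{\sched}$ invertible, so $\vec{y}_{\trans_{\sched}}=\vec{1}^{\top}(\pmatrixsched{\sched})^{-1}$.

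The sign then comes from the probabilistic structure of $\matrixsched{\sched}$. By \recallLemma{pmatrix-matrix} we have $\pmatrixsched{\sched}=\matrixsched{\sched}-I=-(I-\matrixsched{\sched})$, hence $(\pmatrixsched{\sched})^{-1}=-(I-\matrixsched{\sched})^{-1}$ and $\vec{y}_{\trans_{\sched}}=-\vec{1}^{\top}(I-\matrixsched{\sched})^{-1}$. The matrix $\matrixsched{\sched}$ is entrywise non-negative and substochastic, its $i$-th row being the distribution $\tran^{i}$ restricted to $\maybe$; moreover the contraction established inside the proof of \recallLemma{sched-mat-non-singular} shows $(\matrixsched{\sched})^{z}\to\vec{0}$, so the Neumann series converges and $(I-\matrixsched{\sched})^{-1}=\sum_{z\ge 0}(\matrixsched{\sched})^{z}$ has only non-negative entries. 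Multiplying the non-negative row $\vec{1}^{\top}$ by this non-negative matrix and negating gives $\vec{y}_{\trans_{\sched}}\le\vec{0}$, whence $\overline{c_{\tran}}=-y_{\tran}\ge 0$ for every non-basic variable and the induced solution is dual feasible. For the minimum the cost vector is $-\vec{c}$ and the constraints use $(-A\mid I)$; the sign change on the state costs and the sign change on the state columns cancel, the state-column equations again reduce to $\vec{y}_{\trans_{\sched}}\pmatrixsched{\sched}=\vec{1}^{\top}$, and the identical argument applies. The step I expect to be the real obstacle is justifying that $(I-\matrixsched{\sched})^{-1}$ is a genuine, entrywise-non-negative inverse rather than a merely formal series: this is exactly where the \apttext\ assumption is indispensable, since without it the Neumann series need not converge and $\pmatrixsched{\sched}$ could be singular.
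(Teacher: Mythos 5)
Your proof is correct, and it reaches the same intermediate formula as the paper --- the reduced cost of a chosen transition $\tran^{i}$ equals $-(\vec{1}^{1\times\allstates}\pmatrixsched{\sched}^{-1})_{i}$ --- but by different bookkeeping (you solve $\vec{y}\bsched{\sched}=\vec{c}^{\bsched{\sched}}$ for the simplex multipliers directly, whereas the paper reorders the basis into block form and writes out $\bsched{\sched}^{-1}$ explicitly in~(\ref{eq:basis-inverse})). The genuine divergence is in how the sign is established. The paper passes to the restricted Markov chain $\srestrict{\mdp}{\sched}$, observes that the reduced costs there are the same numbers, and argues by contradiction: a negative reduced cost would mean the unique all-states basis of the Markov chain is not dual feasible, so some optimal basis omits a state, forcing that state's reachability probability to be $0$ via Theorem~\ref{thm:lp-problems} and~(\ref{eq:rest-mc-eq-probs}), contradicting \apttext ness. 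You instead prove directly that $\pmatrixsched{\sched}^{-1}=-(I-\matrixsched{\sched})^{-1}=-\sum_{z\geq 0}(\matrixsched{\sched})^{z}$ is entrywise non-positive. Your route is more elementary and self-contained (no appeal to the existence of optimal basic solutions or to Theorem~\ref{thm:lp-problems}), and it gives the standard probabilistic reading of the multipliers as negated expected visit counts. The one step to tighten: the convergence of the Neumann series does not follow from the \emph{statement} of \recallLemma{sched-mat-non-singular}, nor literally from the contraction inside its proof (that contraction is derived under the fixed-point hypothesis $(\matrixsched{\sched})^{z}\vec{x}=\vec{x}$). What you need is that $(\matrixsched{\sched})^{z}\to 0$, which does follow from \apttext ness by the same path argument used there: every $\state\in\maybe$ has a positive-probability path under $\sched$ of length at most $\allstates$ that leaves $\maybe$, so every row sum of $(\matrixsched{\sched})^{\allstates}$ is strictly below $1$ and the powers decay geometrically. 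With that sentence added, your argument is complete, including the minimum case, where the two sign changes cancel exactly as you say.
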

\begin{proof}
First we find a matrix expression for $\bsched{\sched}^{-1}$.
Suppose we reorder the rows of $\bsched{\sched}$ so that the rows corresponding to
transitions in the basis occur first. The resulting matrix is
\[
 \basis'_{\sched} = \left(
\begin{array}{c|c}
A' & I^{(\alltran-\allstates) \times (\alltran-\allstates)}  \\ \hline
\pmatrixsched{\sched} & 0
\end{array}
\right)
\]
where $A'$ is a submatrix of $\bsched{\sched}$.
We can write
\begin{equation}
\label{eq:reordered-basis}
 \basis'_{\sched} = P \bsched{\sched}
\end{equation}
where $P$ is a permutation matrix.
In order to find the inverse of $\basis'_{\sched}$ we pose the following matrix equation:
\begin{multline*} 
\left(\begin{array}{c|c}
A' & I^{(\alltran-\allstates) \times (\alltran-\allstates)}  \\ \hline
\pmatrixsched{\sched} & 0
\end{array}\right)
\left(\begin{array}{c|c}
A_{11} & A_{12} \\ \hline
A_{21} & A_{22}
\end{array}\right)
\\ = 
\left(\begin{array}{c|c}
A' A_{11} + A_{21} & A' A_{12} + A_{22} \\ \hline
\pmatrixsched{\sched} A_{11} & \pmatrixsched{\sched} A_{12}
\end{array}\right)
= I = 
\left(\begin{array}{c|c}
I^{\allstates \times \allstates} & 0 \\ \hline
0 & I^{(\alltran-\allstates) \times (\alltran-\allstates)}
\end{array}\right)
\end{multline*}
These equations,
suggest that we can take $A_{11} = 0$, and hence $A_{21} = I$. Moreover,
it must be $A_{12} = \pmatrixsched{\sched}^{-1}$
(which exists by Lemma~\ref{lemma:matrix-non-singular}) and hence
$A_{22} = -A'\pmatrixsched{\sched}^{-1}$.
The equation below can be easily checked by verifying that
$\basis'^{-1}_{\sched} \basis'_{\sched} = I$
\begin{equation}
\label{eq:basis-inverse}
 \basis'^{-1}_{\sched} = \left( \begin{array}{c|c}
     0^{\allstates \times (\alltran-\allstates)} & \pmatrixsched{\sched}^{-1} \\
                                                                                    \hline
     I^{(\alltran-\allstates) \times (\alltran-\allstates)} & -A' \pmatrixsched{\sched}^{-1}
     \end{array} \right)
\end{equation}

Next, we use~(\ref{eq:basis-inverse}) to show that the reduced costs depend only on the
constraint coefficients of the transitions chosen by the scheduler.

We consider first the case of the maximum. Recall that our constraint matrix
is $A|I$ and the costs $c_{\tran}$ associated to the transitions variables are $0$
for all $\tran$ (see~(\ref{eq:sup-matrix-form})).
According to the definition of reduced cost (see Subsection~\ref{subsec:lin-prog}), to
prove dual feasibility we need to show
$- \vec{c}^{\bsched{\sched}} \bsched{\sched}^{-1} I_{\tran} \geq 0$
for all $\tran \not\in \bsched{\sched}$, where $I_{\tran}$ is the column of the identity matrix
corresponding to $\tran$. From~(\ref{eq:reordered-basis}), we have
$\bsched{\sched}^{-1} = \basis'^{-1}_{\sched} P$, and hence
our inequality is $- \vec{c}^{\bsched{\sched}} \basis'^{-1}_{\sched} P I_{\tran} \geq 0$. 
Since $P$ is a permutation matrix, we know that $P I_{\tran}$ is a column of the
identity matrix, say $I_{k(\tran)}$. Given our costs in~(\ref{eq:sup-matrix-form}), and given
the definition of $\bsched{\sched}$, we have that $\vec{c}^{\bsched{\sched}}$ is the vector
$(\overbrace{1,\cdots,1}^{\allstates},
                          \overbrace{0,\cdots,0}^{\alltran-\allstates})$, and hence
from~(\ref{eq:basis-inverse}) we get
$\vec{c}^{\bsched{\sched}} \basis'^{-1}_{\sched} =
       (\vec{0}^{1 \times (\alltran-\allstates)}, \: \vec{1}^{1 \times \allstates} \pmatrixsched{\sched}^{-1})$.
In conclusion, we have proven
\begin{equation}
\label{eq:dual-feasible-ineq}
- \vec{c}^{\bsched{\sched}} \bsched{\sched}^{-1} I_{\tran} =
 -(\vec{0}^{1 \times (\alltran-\allstates)}, \: \vec{1}^{1 \times \allstates}
                     \pmatrixsched{\sched}^{-1}) \; I_{k(\tran)} \; ,
\end{equation}
and we must prove that this number is greater than or equal to $0$ for all
$\tran \not\in \bsched{\sched}$.

Whenever $k(\tran) \leq \alltran-\allstates$, the result holds
since~(\ref{eq:dual-feasible-ineq}) is $0$.

In case $k(\tran) > \alltran-\allstates$, we prove the result using the fact that these
values depend only on the transitions chosen by $\sched$. In fact, given the \mdptext\
$\mdp$ and the scheduler $\sched$, if we write~(\ref{eq:dual-feasible-ineq})
for the Markov chain $\srestrict{\mdp}{\sched}$ (see Def.~\ref{def:restrict-mc}),
we obtain
\begin{equation}
\label{eq:dual-feasible-ineq-rest}
 - \vec{c}^{\bsched{(\srestrict{\mdp}{\sched}), \sched}} \; \bsched{(\srestrict{\mdp}{\sched}),\sched }^{-1} \; I_{\tran}
 \; \: = \: \: -\vec{1}^{1 \times \allstates} \pmatrixsched{\sched}^{-1} \; I_{\tran}
\end{equation}
for all $\tran \not\in \bsched{(\srestrict{\mdp}{\sched}), \sched}$.
Note that for $\srestrict{\mdp}{\sched}$ there is no need to reorder (as there are no
transitions in the basis) and so $\tran = k(\tran)$. Given that all the transitions
$\srestrict{\mdp}{\sched}$ are chosen by $\sched$, the basis
$\bsched{(\srestrict{\mdp}{\sched}), \sched}$ contains all the states and no transitions.
In this equation, $I_{\tran}$ can be \emph{any} column of $I^{\allstates \times \allstates}$
(again, due to the fact that there are no transitions in the basis).

Suppose, towards a contradiction, that~(\ref{eq:dual-feasible-ineq}) is less than $0$
for some $k(\tran) > \alltran-\allstates$. This is equivalent to
$-\vec{1}^{1 \times \allstates}
              \pmatrixsched{\sched}^{-1} \; I_{k(\tran) - (\alltran-\allstates)} < 0$.
By~(\ref{eq:dual-feasible-ineq-rest}) we have
$-\vec{1}^{1 \times \allstates} \pmatrixsched{\sched}^{-1} \; I_{\tran'} < 0$ for
some $\tran'$ in $\srestrict{\mdp}{\sched}$. Then, the solution
induced by the basis is not dual feasible for the problem associated to
$\srestrict{\mdp}{\sched}$. As there is at least one optimal basic and dual feasible
solution (the one found by simplex method), there exists an optimal solution $\vec{x}^{C}$
such that the corresponding basis $B^{C}$ is not
$\bsched{(\srestrict{\mdp}{\sched}), \sched}$. As in $\srestrict{\mdp}{\sched}$
there exists only one basis containing all states (namely
$\bsched{(\srestrict{\mdp}{\sched}), \sched}$), there exists
$\state \not\in B^{C}$. In consequence, we have
$x^{C}_{\state} = 0$. Since $x^{C}$ is optimal, by Theorem~\ref{thm:lp-problems}, we obtain
$\prrhmi{\sched}{\tsts}{\srestrict{\mdp}{\sched}}{\state} = 0$, from
which~(\ref{eq:rest-mc-eq-probs}) yields
$\prrhmi{\sched}{\tsts}{\mdp}{\state} = 0$. This contradicts the fact that $\sched$
is \apttext.

The proof for the case of the minimum is completely analogous: despite
the differences in the constraints and the cost vector, the
reduced costs in~(\ref{eq:dual-feasible-ineq}) are the same as before:
\[ - \vec{c}^{\bsched{\sched}} \bsched{\sched}^{-1} I_{\tran} =
 -(-\vec{0}^{1 \times (\alltran-\allstates)}, \: -\vec{1}^{1 \times \allstates}
                     (- \pmatrixsched{\sched}^{-1})) \; I_{k(\tran)}
   = - (\vec{0}^{1 \times (\alltran-\allstates)}, \: \vec{1}^{1 \times \allstates}
                     \pmatrixsched{\sched}^{-1}) \; I_{k(\tran)} \; . \]
These values again coincide with the ones in
a system having only the transitions chosen by $\sched$.
\end{proof}


\begin{proof}[Proof (of Theorem~\ref{thm:alg-correct})]
If the algorithm returns a value, then it is
$\arg \min_{\vec{x}} \lpproblem$, where $\lpproblem$ is the
problem~(\ref{eq:sup-matrix-form}) (or~(\ref{eq:inf-matrix-form})
for the minimum). Hence, by Theorem~\ref{thm:lp-problems}, the returned
value coincides with the \textbf{output} specification.
We have that if $\sched$ is \apttext, then $\bsched{\sched}$ is a basis
by Lemma~\ref{lemma:matrix-non-singular}. As a consequence, the algorithm
never enters the branch in line~\ref{line:basis-singular}, and so
the result is returned.
The termination in a single iteration is a consequence of the
fact that the solution corresponding to an optimal scheduler
$\sched$ is both feasible (Theorem~\ref{thm:optimal-is-feasible}) and
dual feasible (Theorem~\ref{thm:apt-is-dual-feasible}).
\end{proof}

\section{Experimental results}
\label{sec:exp-results}

\noindent\textbf{Implementation.}~We implemented our method by extending the model
checker \prism~\cite{KNP11}, using the \lptext\ library \glpk~\cite{glpk}.
We compiled \glpk\ using the library for arbitrary precision \texttt{gmp}.
We needed to modify
the code of \glpk: although there is a solver function that uses exact
arithmetic internally, this function does not allow us to retrieve the exact
value. Aside from these changes to \glpk\ and some additional code scattered
around the \prism\ code (in order to gather information about the
scheduler), the specific code for implementing our method is less than 300
lines long. With these modifications, \prismtext\ is able to print the
numerator and the denominator of the probabilities calculated.

Our implementation works as follows: in the first step, we use the value iteration
already implemented in \prismtext\ to calculate a candidate scheduler. In the
next step, the \lptext\ problem is constructed by iterating over each state: for
each transition enabled, the corresponding probabilities are inserted in the matrix.
The basis is constructed along this process: when a transition is considered, the
description of the scheduler (implemented as an array) is queried about whether this
transition is the one chosen
by the scheduler. Next we solve the \lptext\ problem. For the reasons explained in
Subsection~\ref{subsec:lin-prog}, in Algorithm~\ref{alg:main-algorithm} we use
the dual simplex method, except when we compare it to the primal one. The
reader familiar with \texttt{glpk} might notice that the dual variant is not
implemented under exact arithmetic on \texttt{glpk}: to overcome this, instead of
providing \texttt{glpk} with the original problem, we provided the dual problem and
retrieved the values of the dual variables (the dual problem is obtained by providing
the transpose of the constraint matrix and by negating the cost coefficients, and so
it does not affect the running time).

The experiments were carried out on an Intel i7 @3.40Ghz with 8Gb RAM,
running Windows~7.

\vspace{0.2cm}\noindent\textbf{Case studies.}~We studied three known models
available from the \prism\ benchmark suite~\cite{prism-cases}, where the
reader can look for matters not explained here (for instance, details about
the parameters of each model). For the parameters whose values are not
specified here, we use the default values. In the
IEEE 802.11 Wireless LAN model, two stations use a randomised exponential 
backoff rule to minimise the likelihood of transmission collision.
The parameter $N$ is the number of maximum backoffs. We compute the
maximum probability that the backoff counters of both stations reach
their maximum value.
The second model concerns the consensus algorithm for $N$ processes of
Aspnes \& Herlihy~\cite{DBLP:journals/jal/AspnesH90}, which
uses shared coins. We calculate the maximum probability that the
protocol finishes without an agreement. The parameter
$K$ is used to bound a shared counter. Our third case study
is the IEEE 1394 FireWire Root Contention Protocol
(using the \prism\ model which is based on~\cite{SV99}). We
calculate the minimum probability that a leader is elected
before a deadline of $D$ time units. 


\vspace{0.2cm}\noindent\textbf{Linear programming versus Algorithm~\ref{alg:main-algorithm}.}
Table~\ref{tab:exp-results} allows us to compare (primal and dual) simplex starting from
a default basis, against Algorithm~\ref{alg:main-algorithm}, which
provides a starting basis from a candidate scheduler. 
Aside from the construction of the \mdptext\ from the \prismtext\ language
description (which is the same either using \lptext\
or Algorithm~\ref{alg:main-algorithm}, and is thus disregarded in our comparisons),
the steps in our implementation are:
\begin{inparaenum}[(1)]
\item{perform value iteration to obtain a candidate scheduler;}
\item{construct the \lptext\ problem;}
\item{solve the problem in exact arithmetic in zero or more iterations (the
latter is the case in which the scheduler is not optimal).}
\end{inparaenum}
All these times are shown in Table~\ref{tab:exp-results}, as well as its sum,
expressed in seconds.
The experiments for \lptext\ were run with a time-out of one hour (represented
with a dash).

\begin{table}
{ \centering \scriptsize
\begin{tabular}{|l|r|r|r||r||r||r|r|r||r||} \hline
 & & & & \multicolumn{6}{c|}{Time (seconds)} \\ \cline{5-10}
 & & & & \multicolumn{2}{c||}{\lptext\ without Alg.~\ref{alg:main-algorithm}} & \multicolumn{4}{c||}{Algorithm~\ref{alg:main-algorithm}} \\ \cline{5-10}
\parbox{1.3cm}{Model} & \parbox{0.7cm}{Para\-meters} & \parbox{1cm}{$\allstates = \card{\maybe}$} & 
\parbox{0.7cm}{$\alltran$} & \parbox{0.8cm}{Primal} & \parbox{0.8cm}{Dual} &
\parbox{0.8cm}{Value iter.} & \parbox{1.1cm}{LP constr.} & \parbox{0.9cm}{Dual simplex}& \parbox{1.3cm}{Total}
\\ \hline
Wlan & 3 & 2529 & 96302 & 19.53 & 11.76 & 0.36 & 0.05 & 0.03 & 0.44 \\
(N) & 4 & 5781 & 345000 & 110.32 & 61.83 & 2.30 & 0.21 & 0.06 & 2.57 \\
 & 5 & 12309 & 1295218 & 535.76 & 326.64 & 14.93 & 1.32 & 0.15 & 16.40 \\ \hline
Consensus  & 3,3 & 3607 & 3968 & 251.74 & 35.32 & 2.93 & 0.04 & 0.15 & 3.12 \\
(N, K) & 3,4 & 4783 & 5216 & 488.84 & 64.00 & 6.47 & 0.06 & 0.58 & 7.11 \\
 & 3,5 & 5959 & 6464 & 1085.70 & 105.36 & 12.74 & 0.06 & 1.87 & 14.67 \\
 & 4,1 & 11450 & 12416 & - & 432.98 & 2.88 & 0.11 & 0.19 & 3.18 \\
 & 4,2 & 21690 & 22656 & - & 1951.91 & 20.41 & 0.23 & 0.37 & 21.01 \\
 & 4,3 & 31930 & 32896 & - & - & 59.73 & 0.49 & 0.58 & 60.80 \\
 & 4,4 & 42170 & 43136 & - & - & 134.62 & 0.64 & 0.78 & 136.04 \\
 & 4,5 & 52410 & 53376 & - & - & 246.90 & 0.91 & 0.96 & 248.77  \\ \hline
Firewire & 200 & 1071 & 80980 & 4.50 & 2.65 & 0.28 & 0.04 & 0.01 & 0.33 \\
(D)  & 300 & 23782 & 213805 & - & 1314.32 & 2.89 & 1.04 & 0.24 & 4.17 \\
 & 400 & 81943 & 434364 & - & - & 11.05 & 8.74 & 0.88 & 20.67  \\ \hline
\end{tabular}

}
\caption{Comparison of primal and dual simplex starting from a default basis against
                 Algorithm~\ref{alg:main-algorithm}\label{tab:exp-results}}
\end{table}

Our method always outperforms the naive application of \lptext. The case
with the lowest advantage is Consensus~(3,5), and still our method takes less
than $1/6$ of the time required by dual simplex. 

With respect to the time devoted to exact arithmetic in
Algorithm~\ref{alg:main-algorithm},
in all cases the simplex under exact arithmetic takes a fraction
of the time spent by the other operations of the algorithm (namely, to perform
value iteration and to construct the \lptext\ problem). In Consensus~(3,5),
the simplex algorithm takes less than $1/6$ of the time devoted to the other
operations. In all other cases the ratio is even lower.

The greatest number found was $28821938103543398400$, the denominator
in the solution of Firewire 400. It needs 65 bits to be stored. The
computations were performed using 32 bit libraries, and so
the exact arithmetic computations used around 3 words in the worst case
(which is not really a challenge for an arbitrary precision library).
We can conclude that, even for systems with more than $10000$
states (up to $80000$, in our experiments), the overhead introduced
by exact arithmetic is manageable.

\vspace{0.2cm}\noindent\textbf{Suboptimal schedulers as suboptimal bases.}
Other than measuring whether the calculation is reasonably quick in case the
scheduler from \prismtext\ is optimal, a secondary measuring concerns how close
is the basis to an optimal one in case the scheduler provided by \prismtext\ is
\emph{not} optimal.

Except in cases Consensus~(3,$\cdot$), simplex stopped after $0$
iterations, thus indicating that \prismtext\ was able to find the optimal
scheduler. For optimal schedulers there is no difference
between using primal or dual simplex in Algorithm~\ref{alg:main-algorithm}
(we ran the experiments and the running time
of the simplex variants differed by at most $0.05$ seconds).

The probabilities obtained in each step of the value iteration converge to those
of an optimal scheduler. Given a threshold $\epsilon$, value iteration stops only
after $\abs{x_{s} - x'_{s}} \leq \epsilon$ for all $s$, where $\vec{x}$ and
$\vec{x}'$ are the vectors obtained in the last two iterations.

In Table~\ref{tab:time-basis} we compare the amount of iterations and the time
spent by primal and dual simplex for schedulers obtained using
different thresholds. We considered only the cases Consensus~(3,$\cdot$),
as in other cases the scheduler returned by \prismtext\ was optimum
except for gross thresholds above $0.05$, which are rarely used in practice
(the default $\epsilon$ in \prismtext\ is $10^{-6}$).
In addition to the default value, we considered representatives
the value $10^{-7}$ (since $10^{-8}$
already yields the exact solution for $(3,3)$ in the dual case: a value smaller than
$10^{-7}$ would have yielded uninteresting numbers for this case) and the value
$10^{-16}$, since in $(3,5)$ the scheduler does not improve beyond such threshold.
In fact, for $10^{-16}$ the result is the same as for $10^{-323}$, and $10^{-324}$
is not a valid \texttt{double}. In Java, the type \texttt{double} corresponds to a
IEEE~754 64-bit floating point.

In consequence, we have one case (namely, Consensus~(3,5)), where \prismtext\
cannot find the worst-case scheduler for any \texttt{double} threshold
(and thus should be recoded to use another arithmetic primitives to get
exact results), while our method is able to calculate exact results using
less than two seconds after value iteration, as shown in Table~\ref{tab:exp-results}.

For Consensus~(3,$\cdot$) we see that dual simplex performs betters
than primal simplex. Consensus~(3,4) shows that the primal simplex can behave worse
when starting from $\bsched{\sched}$ than the dual simplex starting from the default
basis (compare with the corresponding row in Table~\ref{tab:exp-results}).
Moreover, it can be the case that it takes \emph{more} time as the threshold
decreases (note that, in contrast, in Consensus~(3,5) the time decreases with the
threshold, as expected). This suggests that the dual variant should be preferred over the primal.

Comparing against Table~\ref{tab:exp-results}, we see that, for each variant
of the simplex method, starting from the basis $\bsched{\sched}$ results in
a quicker calculation than starting from the default basis.

\begin{table}
{ \centering \footnotesize
\begin{tabular}{|l||r|r|r|r|r|r|r|r|r|r|r|r|} \hline
   &  \multicolumn{6}{|c}{\parbox{1.5cm}{Primal}}  &  \multicolumn{6}{|c|}{\parbox{1.5cm}{Dual}}  \\ \cline{2-13}
   & \multicolumn{3}{|c}{\parbox{1.5cm}{Iterations}} & \multicolumn{3}{|c}{Time (seconds)} & \multicolumn{3}{|c}{Iterations} & \multicolumn{3}{|c|}{Time (seconds)} \\ \hline
$\epsilon$ ($10^{-n}$) & 6 & 7 & 16 & 6 & 7 & 16 & 6 & 7 & 16 & 6 & 7 & 16  \\ \hline
Consensus (3,3) & 187 & 134 & 0 & 3.43 & 2.43 & 0.10 & 10 & 6  & 0 & 0.19 & 0.15 & 0.10 \\
Consensus (3,4) & 2497 & 6278 & 0 & 74.42 & 202.58 & 0.14 & 37 & 28 & 0 & 0.63 & 0.51 & 0.13 \\
Consensus (3,5) & 4990 & 4340 & 1239 & 190.53 & 160.44 & 49.487 & 94 & 61 & 6 & 1.93 & 1.24 & 0.25 \\ \hline
\end{tabular}

}
\caption{Time spent when the starting basis is not optimal\label{tab:time-basis}}
\end{table}
                                                                                                                                                                           It is worth mentioning that in all cases the difference between the probabilities provided by \prismtext\ and
the exact values was less than the threshold for value iteration. It indicates that the finite precision of
the computations does not affect the results significantly.                                                                                                                                                
\section{Discussion and further work}

\noindent\textbf{Linear programming versus policy iteration.}~It
is known that the dual simplex method applied for discounted \mdpstext\ 
is just the same as \emph{policy iteration}
(for an introduction to this method see~\cite{DBLP:conf/sfm/ForejtKNP11}) seen from
a different perspective. Indeed, this has been used to obtain complexity bounds
(see~\cite{simplex-strongly-polynomial}). Theorems~\ref{thm:optimal-is-feasible}
and~\ref{thm:apt-is-dual-feasible} establish for undiscounted \mdpstext\ the same
correspondence between basis and schedulers as known for the discounted case, and
as a consequence the dual simplex is policy iteration disguised, also in the
undiscounted case.

Even without considering the results in this paper at all, exact solutions can also
be calculated by implementing policy iteration with exact arithmetic as, in each
iteration, the method calculates the probabilities corresponding to a scheduler and
checks whether they can be improved by another scheduler. Roughly speaking, if the
calculation and the check are performed using exact arithmetic, then the result is
also exact.

Despite this existing alternative, the correspondence between bases and schedulers we
presented in this paper allows to obtain an exact solution by using \lptext\ solvers,
thus profiting from all the knowledge concerning \lptext\ problems (and from existing
implementations such as \texttt{glpk}).

\vspace{0.2cm}\noindent\textbf{Complexity.}~To the best of our knowledge, the precise complexity
of the simplex method in our case is unknown. There are recent results for the
simplex  applied to similar problems. For instance, in~\cite{simplex-strongly-polynomial} it is proven that simplex is strongly polynomial for discounted
\mdpstext. Nevertheless,~\cite{DBLP:conf/icalp/Fearnley10}
shows an exponential lower bound to calculate rewards in the undiscounted case.
Unfortunately
(or not, as there is still hope that we can prove the time to be polynomial in our
case), the construction used in~\cite{DBLP:conf/icalp/Fearnley10} cannot be carried
out easily to our setting, as some of the rewards in the construction are negative
(and the equivalent to the rewards in our setting are the sums
$\sum_{\statep \in \tsts} \tran(\statep)$).

\vspace{0.2cm}\noindent\textbf{Further work.}~In the comparison of our method against \lptext, we
considered only the simplex method, as \glpk\ only implements this method in exact
arithmetic. The feasibility/applicability of other algorithms to solve \lptext\
problems using exact arithmetic is yet to be studied.

The fact that the probabilities obtained are exact allows
to prove additional facts about the system under consideration. For instance, the
exact values can be used in correctness certificates, or be the input of automatic
theorem provers, if they require exact values to prove some other properties of the
system. We plan to concentrate on these uses of exact probabilities.

\vspace{0.2cm}\noindent\textbf{Acknowledgements.}~The author is grateful to David Parker, Vojtech Forejt and
Marta Kwiatkowska for useful comments and proofreading.

\bibliographystyle{eptcs}
\bibliography{bibliography}

\begin{thebibliography}{10}
\providecommand{\bibitemdeclare}[2]{}
\providecommand{\surnamestart}{}
\providecommand{\surnameend}{}
\providecommand{\urlprefix}{Available at }
\providecommand{\url}[1]{\texttt{#1}}
\providecommand{\href}[2]{\texttt{#2}}
\providecommand{\urlalt}[2]{\href{#1}{#2}}
\providecommand{\doi}[1]{doi:\urlalt{http://dx.doi.org/#1}{#1}}
\providecommand{\bibinfo}[2]{#2}

\bibitemdeclare{techreport}{STAN//CS-TR-98-1601}
\bibitem{STAN//CS-TR-98-1601}
\bibinfo{author}{Luca \surnamestart de~Alfaro\surnameend}
  (\bibinfo{year}{1998}): \emph{\bibinfo{title}{Formal Verification of
  Probabilistic Systems}}.
\newblock \bibinfo{type}{Thesis} \bibinfo{number}{CS-TR-98-1601},
  \bibinfo{institution}{Stanford University, Department of Computer Science}.

\bibitemdeclare{misc}{glpk}
\bibitem{glpk}
\bibinfo{author}{Copyright \surnamestart by~Andrew~Makhorin\surnameend}:
  \emph{\bibinfo{title}{\texttt{glpk}}}.
\newblock \bibinfo{howpublished}{\url{http://www.gnu.org/s/glpk/}}.

\bibitemdeclare{article}{DBLP:journals/jal/AspnesH90}
\bibitem{DBLP:journals/jal/AspnesH90}
\bibinfo{author}{James \surnamestart Aspnes\surnameend} \&
  \bibinfo{author}{Maurice \surnamestart Herlihy\surnameend}
  (\bibinfo{year}{1990}): \emph{\bibinfo{title}{Fast Randomized Consensus Using
  Shared Memory}}.
\newblock {\sl \bibinfo{journal}{J. Algorithms}}
  \bibinfo{volume}{11}(\bibinfo{number}{3}), pp. \bibinfo{pages}{441--461}.
\newblock \urlprefix\url{http://dx.doi.org/10.1016/0196-6774(90)90021-6}.

\bibitemdeclare{inproceedings}{DBLP:conf/fsttcs/BiancoA95}
\bibitem{DBLP:conf/fsttcs/BiancoA95}
\bibinfo{author}{Andrea \surnamestart Bianco\surnameend} \&
  \bibinfo{author}{Luca \surnamestart de~Alfaro\surnameend}
  (\bibinfo{year}{1995}): \emph{\bibinfo{title}{Model Checking of Probabalistic
  and Nondeterministic Systems}}.
\newblock In: {\sl \bibinfo{booktitle}{FSTTCS 1995}}, pp.
  \bibinfo{pages}{499--513}.
\newblock \urlprefix\url{http://dx.doi.org/10.1007/3-540-60692-0_70}.

\bibitemdeclare{book}{book:applied-mathematical-programming}
\bibitem{book:applied-mathematical-programming}
\bibinfo{author}{Stephen~P. \surnamestart Bradley\surnameend},
  \bibinfo{author}{Arnoldo~C. \surnamestart Hax\surnameend} \&
  \bibinfo{author}{Thomas~L. \surnamestart Magnanti\surnameend}
  (\bibinfo{year}{1977}): \emph{\bibinfo{title}{Applied Mathematical
  Programming}}.
\newblock \bibinfo{publisher}{Addison-Wesley}.

\bibitemdeclare{inproceedings}{DBLP:conf/icalp/CourcoubetisY90}
\bibitem{DBLP:conf/icalp/CourcoubetisY90}
\bibinfo{author}{Costas \surnamestart Courcoubetis\surnameend} \&
  \bibinfo{author}{Mihalis \surnamestart Yannakakis\surnameend}
  (\bibinfo{year}{1990}): \emph{\bibinfo{title}{Markov Decision Processes and
  Regular Events (Extended Abstract)}}.
\newblock In \bibinfo{editor}{Mike \surnamestart Paterson\surnameend}, editor:
  {\sl \bibinfo{booktitle}{ICALP}}, {\sl \bibinfo{series}{Lecture Notes in
  Computer Science}} \bibinfo{volume}{443}, \bibinfo{publisher}{Springer}, pp.
  \bibinfo{pages}{336--349}.
\newblock \urlprefix\url{http://dx.doi.org/10.1007/BFb0032043}.

\bibitemdeclare{article}{depen-1963:probaprodu:journal:98}
\bibitem{depen-1963:probaprodu:journal:98}
\bibinfo{author}{F.~\surnamestart {D'E}penoux\surnameend}
  (\bibinfo{year}{1963}): \emph{\bibinfo{title}{A probabilistic production and
  inventory problem}}.
\newblock {\sl \bibinfo{journal}{Management Science}} \bibinfo{volume}{10}, pp.
  \bibinfo{pages}{98--108}.

\bibitemdeclare{inproceedings}{DBLP:conf/icalp/Fearnley10}
\bibitem{DBLP:conf/icalp/Fearnley10}
\bibinfo{author}{John \surnamestart Fearnley\surnameend}
  (\bibinfo{year}{2010}): \emph{\bibinfo{title}{Exponential Lower Bounds for
  Policy Iteration}}.
\newblock In: {\sl \bibinfo{booktitle}{ICALP 2010 (2)}}, pp.
  \bibinfo{pages}{551--562}.
\newblock \urlprefix\url{http://dx.doi.org/10.1007/978-3-642-14162-1_46}.

\bibitemdeclare{inproceedings}{DBLP:conf/sfm/ForejtKNP11}
\bibitem{DBLP:conf/sfm/ForejtKNP11}
\bibinfo{author}{Vojtech \surnamestart Forejt\surnameend},
  \bibinfo{author}{Marta~Z. \surnamestart Kwiatkowska\surnameend},
  \bibinfo{author}{Gethin \surnamestart Norman\surnameend} \&
  \bibinfo{author}{David \surnamestart Parker\surnameend}
  (\bibinfo{year}{2011}): \emph{\bibinfo{title}{Automated Verification
  Techniques for Probabilistic Systems}}.
\newblock In: {\sl \bibinfo{booktitle}{SFM}}, pp. \bibinfo{pages}{53--113}.
\newblock \urlprefix\url{http://dx.doi.org/10.1007/978-3-642-21455-4}.

\bibitemdeclare{inproceedings}{KNP11}
\bibitem{KNP11}
\bibinfo{author}{M.~\surnamestart Kwiatkowska\surnameend},
  \bibinfo{author}{G.~\surnamestart Norman\surnameend} \&
  \bibinfo{author}{D.~\surnamestart Parker\surnameend} (\bibinfo{year}{2011}):
  \emph{\bibinfo{title}{{PRISM} 4.0: Verification of Probabilistic Real-time
  Systems}}.
\newblock In \bibinfo{editor}{G.~\surnamestart Gopalakrishnan\surnameend} \&
  \bibinfo{editor}{S.~\surnamestart Qadeer\surnameend}, editors: {\sl
  \bibinfo{booktitle}{CAV 2011}}, {\sl \bibinfo{series}{LNCS}}
  \bibinfo{volume}{6806}, \bibinfo{publisher}{Springer}, pp.
  \bibinfo{pages}{585--591}.
\newblock \urlprefix\url{http://dx.doi.org/10.1007/978-3-642-22110-1_47}.

\bibitemdeclare{book}{puterman}
\bibitem{puterman}
\bibinfo{author}{Martin~L. \surnamestart Puterman\surnameend}
  (\bibinfo{year}{1994}): \emph{\bibinfo{title}{Markov Decision Processes:
  Discrete Stochastic Dynamic Programming}}, \bibinfo{edition}{1st} edition.
\newblock \bibinfo{publisher}{John Wiley \& Sons, Inc.}, \bibinfo{address}{New
  York, NY, USA}.

\bibitemdeclare{inproceedings}{SV99}
\bibitem{SV99}
\bibinfo{author}{M.~\surnamestart Stoelinga\surnameend} \&
  \bibinfo{author}{F.~\surnamestart Vaandrager\surnameend}
  (\bibinfo{year}{1999}): \emph{\bibinfo{title}{Root Contention in {IEEE}
  1394}}.
\newblock In \bibinfo{editor}{J.-P. \surnamestart Katoen\surnameend}, editor:
  {\sl \bibinfo{booktitle}{Proc. 5th International AMAST Workshop on Real-Time
  and Probabilistic Systems (ARTS'99)}}, {\sl \bibinfo{series}{LNCS}}
  \bibinfo{volume}{1601}, \bibinfo{publisher}{Springer}, pp.
  \bibinfo{pages}{53--74}.
\newblock \urlprefix\url{http://dx.doi.org/10.1007/3-540-48778-6_4}.

\bibitemdeclare{article}{simplex-strongly-polynomial}
\bibitem{simplex-strongly-polynomial}
\bibinfo{author}{Yinyu \surnamestart Ye\surnameend} (\bibinfo{year}{2011}):
  \emph{\bibinfo{title}{The Simplex and Policy-Iteration Methods Are Strongly
  Polynomial for the Markov Decision Problem with a Fixed Discount Rate}}.
\newblock {\sl \bibinfo{journal}{Mathematics of Operations Research}}
  \bibinfo{volume}{36}(\bibinfo{number}{4}), pp. \bibinfo{pages}{593--603}.
\newblock \urlprefix\url{http://dx.doi.org/10.1287/moor.1110.0516}.

\bibitemdeclare{misc}{prism-cases}
\bibitem{prism-cases}
\bibinfo{howpublished}{\url{http://www.prismmodelchecker.org/benchmarks/}}.

\end{thebibliography}

\end{document}